\title{Dynamic Membership for Regular Languages}
\author{Antoine Amarilli}{LTCI, Télécom Paris, Institut polytechnique de Paris, France \and \url{https://a3nm.net/}}{antoine.amarilli@telecom-paris.fr}{https://orcid.org/0000-0002-7977-4441}{}
\author{Louis Jachiet}{LTCI, Télécom Paris, Institut polytechnique de Paris, France}{}{}{}
\author{Charles Paperman}{Univ. Lille, CNRS, INRIA, Centrale Lille, UMR 9189 CRIStAL, F-59000 Lille, France}{}{https://orcid.org/0000-0002-6658-5238}{}
\authorrunning{A.\ Amarilli, L.\ Jachiet, C.\ Paperman}
\date{}
\keywords{regular language, membership, RAM model, updates, dynamic}
\newcommand{\card}[1]{\left|#1\right|}
\newcommand{\NN}{\mathbb{N}}
\newcommand{\ZZ}{\mathbb{Z}}
\newcommand{\calJ}{\mathcal{J}}
\newcommand{\eval}{\mathrm{eval}}
\newcommand{\V}{\mathbf{V}}
\newcommand{\A}{\mathbf{A}}
\newcommand{\Com}{\mathbf{Com}}
\newcommand{\MNil}{\mathbf{MNil}}
\newcommand{\QV}{\mathbf{QV}}
\newcommand{\SG}{\mathbf{SG}}
\newcommand{\ZG}{\mathbf{ZG}}
\newcommand{\ZE}{\mathbf{ZE}}
\newcommand{\LZG}{\mathbf{L}\ZG}
\newcommand{\LV}{\mathbf{L}\V}
\newcommand{\QLV}{\mathbf{QL}\V}
\newcommand{\QSG}{\mathbf{Q}\SG}
\newcommand{\LSG}{\mathbf{L}\SG}
\newcommand{\D}{\mathbf{D}}
\newcommand{\Nil}{\mathbf{Nil}}
\newcommand{\QLZG}{\mathbf{Q}\LZG}
\newcommand{\act}{\mathrm{act}}
\newcommand\restr[2]{{%
  \kern-\nulldelimiterspace %
  #1 %
  _{|#2} %
  }}
\theoremstyle{theorem}
\begin{document}

\maketitle
\begin{abstract}
  We study the \emph{dynamic membership problem} for regular languages: fix a
  language $L$, read a word~$w$, build in time $O(\card{w})$ a data structure
  indicating if $w$ is in~$L$, and maintain this structure efficiently
  under letter substitutions on~$w$. We consider this problem on the unit cost RAM model
  with logarithmic word length, where the problem always has a solution in
  $O(\log \card{w} / \log \log \card{w})$ per operation.

  We show that the problem is in $O(\log \log \card{w})$ for languages in an
  algebraically-defined, decidable class $\QSG$, and that it is in $O(1)$ for another
  such class~$\QLZG$. We show that languages not in $\QSG$ admit a reduction from the
  prefix problem for a cyclic group, so that they require $\Omega(\log \card{w} /\log
  \log \card{w})$ operations in the worst case; and that $\QSG$ languages not in
  $\QLZG$ admit a reduction from the prefix problem for the multiplicative
  monoid~$U_1 = \{0, 1\}$, which
  we conjecture cannot be maintained in~$O(1)$. This yields a conditional
  trichotomy. We also investigate intermediate cases between
  $O(1)$ and~$O(\log \log \card{w})$.
  
  Our results are shown via the dynamic word problem for monoids and semigroups,
  for which we also give a classification. We thus solve open problems of the paper of Skovbjerg Frandsen, Miltersen, and
  Skyum~\cite{skovbjerg1997dynamic} on the dynamic word problem, and
  additionally cover regular languages.
\end{abstract}

\section{Introduction}
\label{sec:intro}
This paper studies how to handle letter substitution updates on a word 
while maintaining the information of whether the word belongs to a regular language.
Specifically, we fix a regular language $L$ -- for instance $L = a^* b^*$.
We are then given an input word~$w$, e.g., $w = aaaa$.
We first preprocess $w$ in linear time to build a data structure, which we can
use in particular to test if $w \in L$. Now,
$w$ is edited by letter substitutions, and we want to update $w$ and keep track at each step of
whether $w \in L$. For instance, an update can replace the third letter of~$w$ by
a~$b$, so that $w = aaba$, which is no longer in~$L$. Then another update can
replace, e.g., the fourth letter of~$w$ by a~$b$, so that $w = aabb$, and now we
have $w \in L$ again. Our problem, called \emph{dynamic membership}, is to devise a
data structure to handle such update operations and determine 
whether~$w \in L$, as efficiently as possible. We study this task from a
theoretical angle, but it can also
be useful in practice to maintain a Boolean condition
(expressed as a regular language) on a user-edited word.

Dynamic membership was studied for various update operations, e.g.,
append operations for streaming algorithms or the sliding
window model~\cite{ganardi2016querying,ganardi2018automata,ganardi2019sliding},
letter substitutions for the dynamic word problem for
monoids~\cite{skovbjerg1997dynamic}, or concatenations and
splits~\cite{kirpichov2012incremental}. It was also studied in the 
case of \emph{pattern matching}, where 
we check if the word contains some target pattern~\cite{clifford2018upper}, which is also assumed
to be editable. It is also connected to
the incremental validation problem, which has been studied for strings and for
XML documents~\cite{balmin2004incremental}.
The problem was also studied from the angle of \emph{dynamic complexity}, which
does not restrict the running time but the logical language used to handle
updates~\cite{gelade2012dynamic}; and very recently refined to a study of the amount of
\emph{parallel work} required~\cite{schmidt2021work}.

Our focus in this work is to identify classes of fixed regular languages for which
dynamic membership can be solved extremely efficiently, e.g., in constant time
or sublogarithmic time. Our update language only
allows letter substitutions to the input word, in particular the length of the input
word can never be changed by updates. We make this choice because
insertions and deletions already
make it challenging to efficiently maintain the word itself (see
Section~\ref{sec:extensions}).
We work within the computational model of the unit-cost RAM, with logarithmic
cell size.

\subparagraph*{Dynamic word problem for
monoids~\cite{skovbjerg1997dynamic}.}
Our problem closely relates to the work
by Skovbjerg Frandsen, Miltersen, and
Skyum on the \emph{dynamic word problem for monoids}~\cite{skovbjerg1997dynamic}: fix a finite monoid,
read a \emph{word} which is a sequence of monoid
elements, and maintain under substitution updates the composition of these elements
according to the monoid's internal law. Indeed, the dynamic membership
problem for a language~$L$ reduces to the dynamic word problem for any monoid that
recognizes~$L$; but the converse is not true. Hence, studying the dynamic word
problem for monoids is coarser than studying the dynamic membership problem for
languages, although it is a natural first step and is already very challenging.

In the context of monoids, Skovbjerg Frandsen et al.~\cite{skovbjerg1997dynamic} 
propose a general algorithm for the dynamic word problem 
that can handle each operation in time $O(\log n / \log
\log n)$, for $n$ the length of the word. This is a refinement of the
elementary $O(\log n)$ algorithm that
decomposes the word as a balanced binary tree whose nodes are annotated with the
monoid image of the corresponding infix.
They show that the $O(\log n / \log \log n)$ bound is tight for some monoids, namely noncommutative
groups, and a generalization of them defined via an equation. This is obtained
by a reduction from the so-called \emph{prefix-$\ZZ_d$ problem}, for which an
$\Omega(\log n / \log \log n)$ lower bound~\cite{fredman1989cell} is known in
the cell probe model~\cite{green2021further}. We will reuse this lower bound in
our work.

They also show that the problem is easier for some monoids. 
For instance, commutative monoids can be maintained in $O(1)$, simply by
maintaining the number of element occurrences.
They also show a trickier $O(\log \log n)$ upper bound for
\emph{group-free monoids}:
this is based on a so-called Krohn-Rhodes decomposition~\cite{Rhodes68} and uses
a
predecessor data structure implemented as a van Emde Boas
tree~\cite{van1976design}.
However, there are non-commutative monoids for which the problem is in~$O(1)$ (as we will
show), and there is still a gap between group-free monoids (with an upper
bound in $O(\log \log n)$) and the monoids for which the 
$\Omega(\log n / \log \log n)$ lower bound applies. This was claimed as open
in~\cite{skovbjerg1997dynamic} and not addressed
afterwards. While there is a more recent study by Pǎtraşcu and
Tarniţǎ~\cite{patrascu2007dynamic}, it focuses on single-bit memory
cells.

\subparagraph*{Our contributions.}
In this paper, we attack these problems using algebraic monoid
theory.
This unlocks new results: first on the dynamic word problem for
monoids, where we extend the results of~\cite{skovbjerg1997dynamic}, and then on the
dynamic membership problem for regular languages.

We start with our results on the dynamic word problem for monoids, which are
summarized in Figure~\ref{fig:bestiaire} along with a table of the main classes
in Table~\ref{tab:classes}.
First, in Section~\ref{sec:sg}, we show how a more elaborate $O(\log \log n)$
algorithm can cover all monoids to which the $\Omega(\log n / \log \log n)$
lower bound of~\cite{skovbjerg1997dynamic} does not apply: we dub this class
$\SG$ and characterize it by the equation $x^{\omega+1} y
x^\omega = x^\omega y x^{\omega+1}$, for any elements $x$ and~$y$, where
$\omega$ denotes the idempotent power. Our algorithm shares some ideas with
the $O(\log \log n)$ algorithm of~\cite{skovbjerg1997dynamic}, in particular
it uses van Emde Boas trees, but it faces significant new challenges. For instance,
we can no longer use a Krohn-Rhodes decomposition, and proceed instead by a rather technical
induction on the $\mathcal{J}$-classes of the monoid.
Thus, we have an unconditional dichotomy between monoids in $\SG$,
which are in $O(\log \log n)$,
and monoids outside of $\SG$,
which are in $\Theta(\log n / \log \log n)$.

Second, in Section~\ref{sec:zg}, we generalize the $O(1)$ result on
commutative monoids to the monoid class~$\ZG$~\cite{auinger2000semigroups}.
This class is defined via the equation $x^{\omega+1} y = y x^{\omega+1}$,
i.e., only the elements that are part of a group are required to commute with
all other elements. We show that the dynamic word problem for these monoids is
in~$O(1)$: we use an algebraic characterization to reduce them to 
commutative monoids and to monoids obtained from nilpotent semigroups, for which
we design a simple but somewhat surprising algorithm.
We also show
a conditional lower bound: for any monoid~$M$ not in~$\ZG$, we can reduce the
\emph{prefix-$U_1$} problem to the dynamic word problem for~$M$. This is the
problem of maintaining a binary word under letter substitution updates while answering
queries asking if a prefix contains a~$0$. It can be seen as 
a priority queue (slightly weakened), so we conjecture that no $O(1)$ data structure for this problem
exists in the RAM model.
If this conjecture holds, $\ZG$ is exactly the class of monoids having a dynamic
word problem in~$O(1)$.

We then extend our results in Section~\ref{sec:semigroup} from monoids to the
dynamic word problem for semigroups. Our results for~$\SG$ extend directly:
the upper bound on~$\SG$ also applies to semigroups in~$\SG$, and semigroups
not in~$\SG$ must contain a submonoid not in~$\SG$ so covered by the lower
bound.
For $\ZG$, there are major complications, and we must study the class $\LZG$
of semigroups where all submonoids are in~$\ZG$. Semigroups not in~$\LZG$ are
covered by our conditional lower bound on prefix-$U_1$, but it is very tricky to show the
converse, i.e., that
imposing the condition on~$\LZG$ suffices to ensure tractability. We do so by
showing tractability for $\ZG*\D$, the semigroups generated by \emph{semidirect
products} of $\ZG$ semigroups and so-called \emph{definite semigroups}, and
by showing in~\cite{localityarxiv} that $\ZG*\D=\LZG$, a \emph{locality result}
of possible independent interest.

Next, we extend our results in Section~\ref{sec:languages} from semigroups to
languages. This is done using the notion of \emph{stable
semigroup}~\cite{Barrington92,Chaubard06}, denoted as the $\mathbf{Q}$ operator; 
and specifically the class $\QSG$ of regular languages where the stable
semigroup of the syntactic morphism in is~$\SG$, and the class $\QLZG$ where
all local monoids of the stable semigroup of the syntactic morphism are
in~$\ZG$. We obtain:

\begin{theorem}
  \label{thm:mainres}
Let $L$ be a regular language, and consider the dynamic membership problem
  for~$L$ on the unit-cost RAM with logarithmic word length under letter substitution
  updates:
\begin{itemize}
\item If $L$ is in the class $\QLZG$,
  then the problem is in~$O(1)$.
\item If $L$ is not in the class $\QLZG$ but is in the class $\QSG$, then the
  dynamic membership problem is in~$O(\log \log n)$ with $n$ the length of the
    word. Further, solving it in $O(1)$ time gives an $O(1)$
    implementation of a structure for the prefix-$U_1$ problem.
\item If $L$ is not in the class $\QSG$, then the dynamic membership problem is
  in $\Theta(\log n /\log \log n)$.
\end{itemize}
\end{theorem}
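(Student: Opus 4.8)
The plan is to show that the dynamic membership problem for $L$ has, up to constant factors, the same complexity as the dynamic word problem for the \emph{stable semigroup} $S$ of $L$, and then to read off the three cases from the semigroup classification of Sections~\ref{sec:sg}--\ref{sec:semigroup}. Write $\mu\colon\Sigma^*\to M$ for the syntactic morphism of $L$, put $P=\mu(L)\subseteq M$, let $s$ be the stability index of $\mu$ (so $\mu(\Sigma^s)=\mu(\Sigma^{2s})$), and let $S=\mu(\Sigma^s)$, a subsemigroup of $M$. By definition, $L\in\QSG$ iff $S\in\SG$, and $L\in\QLZG$ iff every local monoid $eSe$ of $S$ (for $e$ an idempotent) is in $\ZG$, i.e.\ iff $S\in\LZG$. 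So the three cases of the theorem are exactly $S\in\LZG$, $S\in\SG\setminus\LZG$, and $S\notin\SG$; for these, Sections~\ref{sec:sg}--\ref{sec:semigroup} give that the dynamic word problem for $S$ is respectively in $O(1)$, in $O(\log\log n)$, and in $\Theta(\log n/\log\log n)$, with the second case admitting a reduction from prefix-$U_1$ and the third from prefix-$\Zd$. It remains to transfer these bounds between $L$ and~$S$, in both directions.

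\emph{From $S$ to $L$, for the upper bounds.} Given an input $w$ of length $n$: if $n<2s$ the instance is of bounded size and everything is trivial, so assume $n\ge 2s$ and write $w=b_1\cdots b_k\,v$ with each $b_j\in\Sigma^s$, $\card{v}<s$, and $k=\Theta(n)$. Store $v$ (and $\mu(v)$) explicitly and run a data structure for the dynamic word problem of $S$ on the $S$-word $\mu(b_1)\cdots\mu(b_k)$. A letter substitution in $w$ either falls in $v$ (handled in $O(1)$) or rewrites a single block $b_j$, which is one substitution in the $S$-word; from the maintained product $\pi\in S$ we compute $\mu(w)=\pi\cdot\mu(v)\in M$, hence whether $w\in L$, in $O(1)$. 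Thus dynamic membership for $L$ is no harder than the dynamic word problem for $S$, which yields the three upper bounds.

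\emph{From $L$ to $S$, for the lower bounds.} Conversely, the dynamic word problem for $S$ reduces to dynamic membership for $L$ with constant overhead: as $S=\mu(\Sigma^s)$, fix for each $t\in S$ a block $x_t\in\Sigma^s$ with $\mu(x_t)=t$, so that an $S$-word $s_1\cdots s_k$ is simulated by $x_{s_1}\cdots x_{s_k}$, on which substitutions act blockwise and whose $\mu$-image is $s_1\cdots s_k$. A membership query returns only one bit, namely whether the $\mu$-image of the simulating word lies in $P$; but $M$ is the \emph{syntactic} monoid of $L$, so any two of its (finitely many) elements are separated by some fixed pair of context words, hence running a constant number of copies of the membership structure, each wrapped in a different fixed pair of contexts, lets a constant number of membership queries recover the exact product $s_1\cdots s_k$ in $S$. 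Composing this with the reduction of Sections~\ref{sec:sg}--\ref{sec:semigroup} through a submonoid of $S$ outside $\SG$ (when $L\notin\QSG$), respectively the reduction of Section~\ref{sec:zg} through a local monoid $eSe$ of $S$ outside $\ZG$ (when $L\in\QSG\setminus\QLZG$), we get a reduction from prefix-$\Zd$ (for some $d\ge2$) to dynamic membership for $L$ in the first case and from prefix-$U_1$ in the second. With the $\Omega(\log n/\log\log n)$ cell-probe bound for prefix-$\Zd$~\cite{fredman1989cell,green2021further} and the matching upper bound above, this closes the last case at $\Theta(\log n/\log\log n)$; and in the middle case an $O(1)$ structure for $L$ would yield one for prefix-$U_1$.

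\emph{Where the care is needed.} Granting the semigroup classification of Sections~\ref{sec:sg}--\ref{sec:semigroup}, what is left is essentially orchestration, and the single delicate point is that the parameter governing the complexity is the \emph{stable} semigroup $S=\mu(\Sigma^s)$, not the full syntactic monoid $M$: it is exactly the set of elements reachable by length-$s$ factors, which is what makes the reduction tight in \emph{both} directions — blocks of length $s$ can realize any element of $S$ (needed for the lower bound), while the product of such blocks stays inside $S$ (needed for the clean upper bound). The genuinely hard work — the $O(\log\log n)$ algorithm for $\SG$, the $O(1)$ algorithm for $\LZG=\ZG*\D$, and the locality result of~\cite{localityarxiv} behind it — is carried out in the earlier sections rather than here.
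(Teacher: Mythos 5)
Your proposal is correct and follows essentially the same route as the paper: the block-of-length-$s$ chunking is the paper's reduction from dynamic membership to the dynamic word problem for the stable semigroup, and your context-wrapping trick to recover the exact product from membership bits is exactly the paper's use of the fact that each $\eta^{-1}(m)$ lies in the Boolean algebra of quotients of $L$, combined with realizing stable-semigroup elements by length-$s$ blocks. The only cosmetic difference is that you reduce the word problem for all of $S$ to membership, whereas the paper only does so for a submonoid of $S$ witnessing hardness; both are sound and the hard work indeed resides in the earlier sections as you note.
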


We last present in Section~\ref{sec:extensions} some extensions and questions
for future work: preliminary observations on the precise complexity of languages
in $\QSG\setminus\QLZG$ (as the $O(\log \log n)$ bound is not shown to be tight), the complexity of deciding which case of the theorem
applies, the support for insertion and deletion updates, and the support for
infix queries.

\section{Preliminaries and Problem Statement}
\label{sec:prelim}
\subparagraph*{Computation model.}
We work in the RAM model with unit cost, i.e., each cell
can store integers of value at most polynomial in
$O(\card{w})$ where $\card{w}$ is the length of the input,
and arithmetic operations (addition, successor, modulo, etc.) on two
cells take unit time.
As the integers have at most polynomial value, the memory usage is
also polynomially bounded.

We consider \emph{dynamic problems} where we are given an input word,
preprocess it in linear time to build a data structure,
and must then handle \emph{update operations} on the word (by reflecting them in
the data structure), and 
\emph{query operations} on the current state of the word (using the data
structure).
The \emph{complexity} of the problem is the worst-case
complexity of handling an update or answering a query.

Like in~\cite{skovbjerg1997dynamic}, the lower bounds
that we show actually hold in the coarser \emph{cell probe}
model, which only considers the number of memory cells accessed during a
computation. Furthermore, they hold even without the assumption that the
preprocessing is linear.

Given two dynamic problems $A$ and $B$, we say that $A$ has a
\emph{(constant-time) reduction} to $B$ if we can implement a data structure
for problem~$A$ having constant-time complexity when using as oracle constantly
many data structures for problem~$B$ (built during the preprocessing).
In other words, queries and updates on the structure for~$A$ can perform constant-time
computations using its own memory, but they can also use the data structures for~$B$ as
an oracle, i.e., perform a constant number of queries and updates on them,
which are considered to run in~$O(1)$. We similarly talk of 
a dynamic problem having a \emph{(constant-time) reduction} to multiple problems, meaning
we can use all of them as oracle.
If problem $A$ reduces to problems $B_1,
\ldots, B_n$, then any complexity upper bound that holds on all problems $B_1,
\ldots, B_n$ also holds for~$A$, and any complexity lower bound on~$A$ extends
to at least one of the~$B_i$.

\subparagraph*{Problem statement.}
Our problems require some algebraic prerequisites. We refer the
reader to the book of Pin~\cite{Pin84} and his lecture
notes~\cite{mpri} for more details.
A \emph{semigroup} is a set $S$ equipped with an associative composition law
(written multiplicatively), and a \emph{monoid} is a semigroup $M$ with a
\emph{neutral element}, i.e., an element $1$ such that $1x = x1 = x$ for all $x
\in M$; the neutral element is then unique. One example of a monoid is the
\emph{free monoid} $\Sigma^*$ defined for a finite alphabet $\Sigma$ and
consisting of all words with letters in~$\Sigma$ (including the empty word),
with concatenation as its law.
Except for the free monoid, all semigroups and monoids considered are finite.

A semigroup element $x \in S$ is \emph{idempotent} if $xx = x$.
For $x \in S$, we denote by $\omega$ the idempotent power of~$x$, i.e., the
least
positive integer such that $x^{\omega}$ is idempotent.
A \emph{zero} for~$S$ is an element $0 \in S$ such that $0x = x0 = 0$ for all $x \in S$: if it exists, it is unique.
Given a semigroup $S$, we write $S^1$ for the monoid obtained by adding a fresh
neutral element to~$S$ if it does not already have one.

A \emph{morphism} from a semigroup $S$ to a semigroup $S'$ is a map
$\mu\colon S\to S'$ such that for any $x, y \in S$, we have $\mu(xy) = \mu(x)
\mu(y)$. A \emph{morphism} from a monoid $M$ to a monoid $M'$ must additionally
verify that $\mu(1) = 1'$, for $1$ and $1'$ the neutral elements of~$M$ and
$M'$ respectively.

The \emph{direct product} of two monoids $M_1$ and $M_2$ is $M_1 \times M_2$
with componentwise composition; it is also a monoid. A \emph{quotient} of a
monoid $M$ is a monoid $M'$ such that there is a surjective morphism from~$M$
to~$M'$. A \emph{submonoid} is a subset of a monoid which is also a monoid.
The analogous notions for semigroups are defined in the expected way.
A \emph{pseudovariety} of monoids (resp., semigroups) is a class of monoids
(resp., semigroups) closed under 
direct product, quotient and
submonoid (resp., subsemigroup).
The pseudovariety of monoids (resp., semigroups) \emph{generated} by a class~$\V$ of
monoids (resp., of semigroups)
is the least pseudovariety closed under these operations and containing~$\V$.
As we are working with finite semigroups and monoids,
we refer to pseudovarieties simply as varieties.

We consider dynamic problems where we maintain a word $w$ on a finite
alphabet~$\Sigma$, every letter being stored in a cell. We allow
\emph{letter substitution} updates of the form $(i, a)$ for $1 \leq i \leq \card{w}$ and
$a \in \Sigma$. The letter substitution update $(i, a)$ replaces the $i$-th letter
of~$w$ by~$a$;
the size $\card{w}$ of the word never changes.
Given the input word $w$, we first preprocess it in time $O(\card{w})$ to build
a data structure. The data structure must then support update operations for
letter substitution updates, and some query operations (to be defined below). 
The complexity that we measure is the worst-case complexity 
of an update operation or query operation, as a function
of~$\card{w}$.
Our definition does not limit the memory used.
However, all
our upper complexity bounds 
actually have memory usage in~$O(\card{w})$.
Further, all our lower bounds hold even when no assumption is made on the memory
usage.

We focus on three related dynamic problems, allowing different query operations.
The first is the
\emph{dynamic word problem for monoids}: fix a monoid~$M$, the alphabet
$\Sigma$ is~$M$, and the query returns the \emph{evaluation} of the current
word~$w$, i.e., the product of the elements of~$w$ (it is an element of~$M$).
This is the problem studied in~\cite{skovbjerg1997dynamic}.
The second is the \emph{dynamic word problem for semigroups}, which is the same but
with a semigroup, and assuming that $\card{w}>0$. The third is the
\emph{dynamic membership problem for regular languages}: we fix a regular
language~$L$ on the alphabet~$\Sigma$, 
and the query checks whether the current word belongs to~$L$.

We study the data complexity of these problems in the rest of this paper,
i.e., the complexity expressed as a function of~$w$, with the monoid, semigroup,
or language being fixed.
Let us first observe that, for monoids and more generally for semigroups,
the usual algebraic operators of quotient,
subsemigroup, and direct product, do not increase the complexity of the problem:
\begin{toappendix}
\subsection{Proof of Proposition~\ref{prp:closure}}
\end{toappendix}
\begin{propositionrep}
  \label{prp:closure}
	Let $S$ and $T$ be finite semigroups. The
        dynamic word problem of subsemigroups or quotients of~$S$ reduces
        to the same problem for~$S$, and 
        the dynamic word problem of $S\times T$ reduces to the same problem for~$S$ and~$T$.
\end{propositionrep}
\begin{proof}
In the case of submonoids, we can simply solve the dynamic word problem
for the submonoid by performing the computations in $S$. In the case of
  quotient, we can represent each element of the quotient by some choice of
  representative element in~$S$, perform the computation in~$S$,
  and check in which equivalence class in the quotient does the result in~$S$
  fall.
For the product, we can simply maintain the structures for both $S$ and $T$ simultaneously.
\end{proof}

\subparagraph*{Hard problems.}
All our lower bounds are obtained by reducing from the problem \emph{prefix-$M$}, for
$M$ a fixed monoid. In this problem, we maintain a word of~$M^*$ under
letter substitution updates, and handle \emph{prefix queries}: given a
prefix length, return the evaluation of the prefix of that length. 

In particular, for $d \geq 2$, we consider the problem \emph{prefix-$\ZZ_d$} for
$\ZZ_d$ the cyclic group of order~$d$, i.e., $\ZZ_d = \{0, \ldots, d-1\}$ with
addition modulo~$d$, where the evaluation of prefix is the sum of its elements modulo~$d$. The
following lower bound is known already in the cell probe model:

\begin{theorem}[\cite{fredman1989cell,skovbjerg1997dynamic}]
  \label{thm:zzd}
  For any fixed $d \geq 2$,
  any structure for prefix-$\ZZ_d$ on a word of
  length~$n$ has complexity $\Omega(\log n / \log \log n)$.
\end{theorem}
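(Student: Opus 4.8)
The statement is a known lower bound, and my plan is to derive it from the \emph{chronogram method} of Fredman and Saks~\cite{fredman1989cell} for the dynamic partial-sums problem (see also~\cite{skovbjerg1997dynamic,green2021further}). First I would reduce to the case where $d$ is prime: if $p$ is a prime dividing~$d$, the map $x \mapsto (d/p)\cdot x$ embeds $\ZZ_p$ as a subgroup of~$\ZZ_d$, and the evaluation in~$\ZZ_d$ of the image of a prefix is $(d/p)$ times the evaluation in~$\ZZ_p$ of that prefix, from which the latter is recovered by dividing; hence prefix-$\ZZ_p$ has a constant-time reduction to prefix-$\ZZ_d$, and it suffices to lower-bound prefix-$\ZZ_p$, i.e.\ the partial-sums problem over the nontrivial group~$\ZZ_p$. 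Since an amortized lower bound of $\Omega(\log n/\log\log n)$ over a sequence of $\Theta(n)$ operations already yields the claimed worst-case bound, it is enough to exhibit one such sequence, and since the argument only counts probed cells, the bound holds in the cell-probe model and with no restriction on the preprocessing.

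For the core argument I would run the chronogram construction. Over a word of length~$n$, perform a random sequence of updates split into \emph{epochs} $E_\ell, E_{\ell-1}, \dots, E_1$ performed in this order, where $E_i$ consists of $\Theta(\beta^i)$ updates setting uniformly random elements of~$\ZZ_p$, the positions they rewrite being chosen so that $\sum_i \beta^i = \Theta(n)$ and so that a prefix query remains sensitive to every epoch; then issue prefix queries at the end. The growth ratio~$\beta$ must be large enough that the updates performed \emph{after} a given epoch cannot have recorded the information that epoch contributed: $\beta$ has to dominate a constant times $t_u \cdot b$, where $t_u$ is the update cost and $b = \Theta(\log n)$ the word length. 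Assuming for contradiction that every operation runs in $o(\log n/\log\log n)$, one may then take $\beta = \log^3 n$, so the number of epochs is $\ell = \log_\beta n = \Theta(\log n/\log\log n)$.

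The technical heart — and the step I expect to be the main obstacle — is the per-epoch analysis: for each~$i$, after conditioning on the updates outside~$E_i$, one shows that the prefix queries must repeatedly probe cells whose most recent write occurred during~$E_i$, since otherwise the random updates of~$E_i$ could be reconstructed, hence encoded, using only the $o(\beta^i)$ cells written during~$E_{i-1},\dots,E_1$, which carry strictly too little information; aggregating over the $\ell$ epochs forces a total of $\Omega(\ell) = \Omega(\log n/\log\log n)$ probes per operation on average, and hence in the worst case. The delicate quantitative point, specific to $\ZZ_p$ being a constant-size group — so that a single prefix answer reveals only $O(1)$ bits whereas a cell holds $\Theta(\log n)$ bits — is exactly what makes the resulting bound $\Omega(\log n/\log\log n)$ rather than $\Omega(\log n)$ and what pins down the polylogarithmic choice of~$\beta$. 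Rather than reproving this, I would invoke the analysis of~\cite{fredman1989cell}, in the form used in~\cite{skovbjerg1997dynamic} and adapted to the cell-probe model in~\cite{green2021further}.
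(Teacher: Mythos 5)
The paper gives no proof of this theorem: it is imported verbatim from Fredman--Saks and Skovbjerg Frandsen et al., which are exactly the sources your sketch ultimately invokes, so your proposal takes essentially the same route as the paper. Your outline --- the constant-time reduction from prefix-$\ZZ_p$ for a prime $p$ dividing $d$, the geometric epochs of random updates with $\beta$ polylogarithmic, and the information-theoretic probe-counting argument, with the quantitative heart delegated to the cited chronogram analysis --- is correct.
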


We also consider the problem \emph{prefix-$U_1$}, where $U_1
= \{0, 1\}$ is the multiplicative monoid whose composition is the logical AND, i.e., prefix queries
check if the prefix contains an occurrence of~$0$. Equivalently, 
we must maintain a subset $S$ of a universe $\{1, \ldots,
n\}$ (intuitively $n$ is the length of the word) under insertions and
deletions, and support \emph{threshold queries} that ask, given
$0 \leq i \leq n$, whether $S$ contains some element which is $\leq i$
(i.e., if some position before~$i$ has a~$0$).
The prefix-$U_1$ problem can be solved
in $O(\log \log n)$~\cite{thorup2007equivalence}
with a priority queue data structure,
or even in expected $O(\sqrt{\log \log n})$ if we allow
randomization~\cite{han2002integer}.
Note that prefix-$U_1$
is slightly weaker than a priority queue as we can only \emph{compare} the
minimal value to a value given as input.
We do not know of lower bounds on prefix-$U_1$, but 
conjecture~\cite{cstheorymin} that it cannot be solved in~$O(1)$:

\begin{conjecture}
  \label{con:cpq}
  There is no structure for prefix-$U_1$ with complexity~$O(1)$.
\end{conjecture}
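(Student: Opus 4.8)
Since $U_1=\{0,1\}$ under logical AND is an idempotent (semilattice) monoid rather than a group, the reduction behind Theorem~\ref{thm:zzd} gives no information here, and indeed prefix-$U_1$ admits an $O(\log\log n)$ solution; so any proof of Conjecture~\ref{con:cpq} must establish a \emph{superconstant} bound that lives strictly below the $\Omega(\log n/\log\log n)$ barrier. My first move is to restate the problem in its cleanest form: maintain a set $S\subseteq\{1,\dots,n\}$ under insertions and deletions, with a query asking whether $\min S\le i$. The only operation that is not trivially $O(1)$ is deleting the current minimum, which forces the structure to recover the \emph{next} minimum; intuitively, this is where any hardness must come from, so the plan is to build an adversarial update sequence that repeatedly destroys and reveals minima.

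The natural tool is a cell-probe lower bound via the chronogram method of Fredman and Saks. I would process a random sequence of updates split into epochs $E_1,\dots,E_k$ of geometrically decreasing sizes, followed by threshold queries, and argue that a correct answer must, in expectation, probe a cell last written during each epoch. Since epoch $E_j$ overwrites at most $\card{E_j}\,t_u$ cells, where $t_u$ is the update time, a per-epoch probe across $k=\Theta(\log n/\log r)$ epochs (for a suitable radix $r$) would translate into a nontrivial lower bound on $t_u+t_q$, with $t_q$ the query time.

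The crux, and the reason this is stated as a conjecture rather than proved, is that a single threshold query returns only one bit, and because the monoid is idempotent the answer depends solely on the current minimum and thus \emph{forgets} almost all of the update history. Consequently the usual entropy accounting, which requires each epoch to inject enough information into the query's answer, breaks down: an adversary controlling the structure could arrange for the relevant minimum to lie in one epoch, rendering the others invisible to any single query. To repair this I would try to amplify the output: a batch of $\Theta(\log n)$ threshold queries reconstructs $\min S$ exactly, recovering $\Theta(\log n)$ output bits, and I would feed this batched, high-entropy query into an information-transfer argument in the style of Pǎtraşcu, or else attempt to transport the round-elimination predecessor lower bound of Pǎtraşcu and Thorup down to its decision-only variant.

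The main obstacle I expect is precisely this amplification step: batching queries also lets the data structure \emph{share} work across them, so the total probe count need not be the sum of the per-query costs, and it is unclear how to prevent an $O(1)$-amortized strategy from answering an entire binary search with a single van Emde Boas-style descent. Overcoming this would demand an argument tuned to low-entropy, aperiodic queries that neither the chronogram method nor the group-based reduction of Theorem~\ref{thm:zzd} currently provides. A successful proof would likely have to isolate a genuinely \emph{dynamic} obstruction to maintaining a minimum under deletions, separating it from the static predecessor problem to which prefix-$U_1$ reduces only in one direction; absent such a new idea, the statement remains a conjecture.
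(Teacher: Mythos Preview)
The statement is a \emph{conjecture}, and the paper provides no proof; it is explicitly left open. So there is no paper proof to compare against, and you correctly do not claim to have one either --- your proposal is a research sketch that ends by acknowledging the problem is open.

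That said, there is a substantive obstruction the paper flags that your sketch misses. You organize your attack around cell-probe techniques (the Fredman--Saks chronogram, information transfer \`a la P\u{a}tra\c{s}cu), and you diagnose the difficulty as the low entropy of a single threshold answer. But the paper's remark immediately following the conjecture points to a more fundamental barrier: the best known algorithm for prefix-$U_1$ reduces to sorting small sets of large integers, and sorting is \emph{free} in the cell-probe model (Thorup). Hence there is no cell-probe lower bound to be had at all --- not because the entropy accounting is delicate, but because an $O(1)$ cell-probe priority queue is not ruled out. Any proof of Conjecture~\ref{con:cpq} would have to exploit the RAM model specifically, i.e., bound actual computation rather than memory accesses, and no standard technique does this. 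Your chronogram plan is therefore blocked at a more basic level than the amplification issue you identify: even a perfect entropy argument would only yield a cell-probe bound, which cannot separate prefix-$U_1$ from $O(1)$.
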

Note that the best algorithm for prefix-$U_1$ works by sorting small sets of large integers. This takes linear time in the cell
probe model, so does not rule out the existence of an $O(1)$
priority queue~\cite{thorup2007equivalence}. Hence, a lower bound for
prefix-$U_1$ would
need to apply to the RAM model specifically, which would require new
techniques.

Our last hard problem is prefix-${U_2}$
where $U_2$ is the monoid $\{1, a, b\}$ with composition law $xy=y$ for
$x,y\in\{a, b\}$,
i.e., queries check if the last non-neutral element is $a$ or~$b$ (or
nothing). By adapting known results on
the \emph{colored predecessor problem}~\cite{patrascu07}, we have:

\begin{toappendix}
\subsection{Proof of Theorem~\ref{thm:prefixU2pred}}
\end{toappendix}
\begin{theoremrep}[Adapted from \cite{patrascu07}]
\label{thm:prefixU2pred}
Any structure for prefix-${U_2}$ on a word of
length $n$ must be in $\Omega(\log \log n)$.
\end{theoremrep}

\begin{toappendix}
  To prove this result, we first introduce prerequisites on the colored
  predecessor problem. We then explain how it relates to the prefix-$U_2$
  problem. We then derive a lower bound for prefix-$U_2$.
\subparagraph*{Colored predecessor problem.}
The colored predecessor problem is a problem parameterized by $n$. In
this problem, we have a set $S\subseteq \{ 1, \ldots, n\}$ and a color
(black or white) for each element in $S$ and we need to answer queries
asking for the color $c_y$ of the biggest element $y\in S$ such that
$y\leq x$ for $x$ a parameter of the query.

In the static version of the problem, the set $S$ is given in advance
and we are allowed to compute an index before receiving the query. The
time complexity is then measured only in terms of the time required to
answer the query.

In~\cite{patrascu07}, it has been proved that, in the worst case, the
  static colored predecessor problem for $S$ a subset of $\{1, \ldots,
  n^2\}$ with $|S|=n$ cannot be solved in less than $\Omega(\log \log
  n)$ query time if the space allowed is bounded by $n^{1+o(1)}$. And
  this lower bound holds in the cell probe model, and still holds even
  if we allow randomization and allow the answer to be only probably correct.

\subparagraph*{Reduction from the static colored predecessor problem to prefix-$U_2$.}
Now let us suppose that we have a maintenance scheme for prefix-$U_2$
  with complexity $d(n)$. Let $S\subseteq \{1, \ldots, n^2\}$ be an instance of the
static colored predecessor problem and let us show that we can build a
data structure for $S$ that takes $O(n\times d(n^2))$ space and can
answer predecessor queries with $O(d(n^2))$ cell probes. If we manage
to do that then either $d(n^2) = n^{o(1)}$ and thus we have $d(n^2)
= \Omega(\log\log n)$ using the static lower bound or $d(n^2)\neq
n^{o(1)}$. In both cases we have $d(n^2) = \Omega(\log\log n)$ and
thus $d(n) = \Omega(\log\log n)$.

\subparagraph*{Lower bound for prefix-$U_2$.}
  Since $S\subseteq \{1,\ldots,n^2\}$, we create a word $w$ of length $n^2$ where
all elements are the neutral element of $U_2$. Then we perform $n$
letter substitution updates on $w$ so that the $i$-th letter of $w$ is an $a$
when $i\in S$ and $i$ has color White, is a $b$ when $i\in S$ and $i$ has
color Black, and is the neutral element of $U_2$ when $i\not\in S$. The
index for prefix-$U_2$ on $w$ can now be used to answer predecessor
queries on $S$, because querying the prefix of length $k$ returns the color of the
predecessor query with parameter $k$ (White when the result is $a$, Black when
it is $b$, and the neutral element when $k$ has no predecessor in $S$).

Since queries on this index take $d(n^2)$ time, we do have a scheme for
the static predecessor problem in time $d(n^2)$, but, before concluding,
we need to make sure that the space usage is bounded by
$O(n^{1+o(1)})$. This is not obvious for two reasons: first because
the initialization of the dynamic algorithm might modify $O(n^2)$
memory cells during the initialization (or even more if we allow a
superlinear preprocessing time) and second because during each update
the dynamic algorithm might access a memory cell $i$ with $i\neq
n^{1+o(1)}$ (e.g. it might use the memory cell $n^2$). We now explain how to
  handle these two issues.

\subparagraph*{Reducing the memory footprint.}
To handle the first issue, we can notice that the whole preprocessing
computation depends only on $n$ and therefore if a cell has not been
modified since the initialization of the dynamic algorithm then we can recover its
value from $n$. For the second issue, we can notice that each update
takes $d(n^2)$ time, and therefore the total number of memory cells
that have been modified during one of the updates is bounded by $d(n^2)\times
n$. Therefore using a perfect hashing scheme~\cite{perfectHashing}
using $O(d(n^2)\times n)$ memory cells, we can retrieve in $O(1)$ for
each address $i$ whether $i$ was modified during an update and if it
has been modified, what is its current value.

All in all, we can modify the query function for our dynamic problem
in the following way: whenever the dynamic algorithm wants to read the
cell at address $i$ we use the perfect hashing scheme to determine if
$i$ has been modified since the initialization, if it has been
modified we retrieve its value and otherwise we recompute the value
that can be deduced from $n$. Such an algorithm makes $O(d(n^2))$ cell
probes over a memory of size $O(n\times d(n^2))$. This is what we wanted to
obtain, concluding the proof.
\end{toappendix}

\subparagraph*{General algorithms.}
Of course, the ``hard'' prefix problems, and the three problems that we study,
can all be solved in
$O(\card{w})$ by re-reading the whole word
at each update.
We can improve this to $O(\log \card{w})$
by maintaining a balanced binary tree on the letters of the word, with each node of
the tree carrying
the evaluation in the monoid of the letters reachable from that node.
Any letter substitution update on the word can be propagated up to the root in logarithmic time, and
the annotation of the root is the evaluation of the word.
This algorithm has been implemented in practice~\cite{kirpichov2012incremental}.
A finer bound is given in~\cite{skovbjerg1997dynamic} using a
folklore technique of working with $(\log n)$-ary trees rather than binary trees,
and using the power of the RAM model. We recall it here for monoids (but it
applies to all three problems):

\begin{toappendix}
  \subsection{Proof of Theorem~\ref{thm:ub_ln_lln}}
\end{toappendix}
\begin{theoremrep}[\cite{skovbjerg1997dynamic}]
  \label{thm:ub_ln_lln}
  For any fixed monoid~$M$, 
  the dynamic word problem and prefix problem for~$M$
  are in $O(\log n / \log \log n)$.
\end{theoremrep}

\begin{toappendix}
  We first show the claim for the dynamic word problem.
  Let $(M,\odot)$ be the fixed monoid and $w = w_1 \ldots w_n $ be a word. Let
  us show that we can maintain the value $\nu(w) =
  w_1 \odot \cdots \odot w_n$ after letter substitution updates on $w$ in
  $O(\log n/ \log \log n)$.

  We denote by $G_M$ the (infinite) directed graph whose nodes are the
  elements of $M^*$ and where a node $(m_1, \dots, m_\ell)$ has
  $\ell\times \card{M}$ outgoing edges each labeled with a different
  element of $\{1, \ldots, \ell\}\times M$. For a node $(m_1, \dots, m_\ell)$, the
  edge labeled $(i,v)$ goes to $(m_1, \dots, m_{i-1}, v, \allowbreak m_{i+1},
  \allowbreak
  \ldots, m_\ell)$. Finally the \emph{value} of the node $(m_1, \dots,
  m_\ell)$ is defined as $m_1 \odot \dots \odot m_\ell$.

  We denote by $G_M^k$ the restriction of $G_M$ to nodes $(m_1, \dots,
  m_\ell)$ where $\ell\leq k$. $G_M^k$ has at most $\card{M}^{k}\times k$
  nodes and the degree of each node is bounded by $|M|\times k$ so
  there exists $\gamma$ (independent of $n$) such that for $k=\lfloor\gamma
  \times \log n\rfloor$, the graph $G^k_M$ can be computed in $O(n^{1/2})$ time and stored
  with $O(n^{1/2})$ space (in fact, for any $\epsilon>0$,
  we could achieve $n^{\epsilon}$ by changing $\gamma$).
  We can also ensure
  that the value of each node is pre-computed (and takes constant time
  to access) and that retrieving the $(i,v)$-labeled neighbor of a
  given node takes constant time.

  \begin{claim}
    Once $G_M^{k}$ is computed, we have a scheme $S_k(n)$
    with $O(\log_{k}(n))$ update time that allows to maintain the value
    of $\nu(w)$ after letter substitution updates for $w$, where $n$ is the length
    of~$w$.
  \end{claim}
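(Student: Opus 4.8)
The plan is to implement $S_k(n)$ by maintaining a balanced $k$-ary tree over the $n$ positions of the word, in which every internal node caches, as a node of $G_M^k$, the tuple of values of its children. Concretely, during preprocessing I would build a balanced $k$-ary tree whose leaves are the positions $1, \dots, n$ of~$w$ in order; assuming $n \geq 1$, this tree can be taken of depth $\lceil \log_k n \rceil = O(\log_k n)$, with every internal node having between $2$ and $k$ children (nodes near the last leaf may have fewer than $k$ children, which is harmless since $G_M^k$ contains all tuples of length at most~$k$). Each leaf stores its current letter $w_i \in M$, and each internal node $v$ with children $c_1, \dots, c_\ell$, $\ell \le k$, stores a pointer to the node $(\mathrm{val}(c_1), \dots, \mathrm{val}(c_\ell))$ of $G_M^k$, where $\mathrm{val}$ of a leaf is its letter and $\mathrm{val}$ of an internal node is the value precomputed for the $G_M^k$-node it points to. An immediate bottom-up induction shows that the value cached at the root is $\nu(w)$.

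For the preprocessing I would fill in these pointers from the leaves upward. To create the pointer at an internal node we read off the (at most $k$) values of its children in $O(k)$ time to form their tuple, and then locate the corresponding node of $G_M^k$ in $O(1)$ using a hash table from tuples to $G_M^k$-nodes, which we add to $G_M^k$ when it is built at a cost of $O(n^{1/2})$ extra time and space. The level just above the leaves has $O(n/k)$ internal nodes, the next level $O(n/k^2)$, and so on, so the total work is $O(k)\cdot\sum_{j\ge 1} n/k^j = O(n)$; together with the $O(n^{1/2})$ cost of building $G_M^k$ this keeps the preprocessing in~$O(n)$.

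To handle an update $(i,a)$, I would change the letter stored at leaf~$i$ and then walk from leaf~$i$ up to the root. Exactly one node is affected at each level; when we reach an internal node $v$ whose $j$-th child has just changed its value to $v'$, we obtain the new $G_M^k$-node for $v$ by following the edge labeled $(j, v')$ out of $v$'s current $G_M^k$-node, which is $O(1)$ by the properties of $G_M^k$, and its precomputed value is the new $\mathrm{val}(v)$, i.e.\ the change to propagate to the parent of~$v$. After at most $\lceil \log_k n \rceil$ such steps we reach the root, whose cached value is the updated $\nu(w)$. Hence each update, and trivially each query, is handled in $O(\log_k n)$ time, as claimed.

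The only genuine subtlety is ensuring that each step up the tree costs $O(1)$: this is precisely why we carry the child tuple explicitly as a node of $G_M^k$ and use its \emph{edge} structure, instead of recomputing a product $\mathrm{val}(c_1)\odot\cdots\odot\mathrm{val}(c_\ell)$ from scratch at every level, which would cost $O(k)$ per level and destroy the bound. The other point needing a little care is the geometric-sum accounting that keeps preprocessing linear; both are routine once $G_M^k$ is available with constant-time value lookup and constant-time edge following.
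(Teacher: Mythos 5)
Your proposal is correct and takes essentially the same approach as the paper: the paper's recursive schemes $S_k(n)$ are precisely your balanced $k$-ary tree, with each node maintaining a (pointer to a) node of $G_M^k$ whose value equals the product of its block, and with an update propagated one level in $O(1)$ by following the $(j,v)$-labeled edge, giving $O(\log_k n)$ per update. The only cosmetic differences are presentational (the paper splits into blocks of size the largest power of $k$ below $n$ and describes the structure recursively, while you build the tree explicitly and add a hash table for locating initial $G_M^k$-nodes during the linear-time preprocessing), and these do not affect the argument.
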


  \begin{proof} This idea is based on a modification of the Fenwick
    tree data structure with a branching factor of $k$. The proof
    works by induction. Each scheme $S_k(n)$ will maintain a node in
    $G_M^k$ such that the product of elements of the word is equal to the
    value of this node (which takes $O(1)$ to retrieve).

\begin{itemize}

\item For $0 \leq n \leq k$, we will store the node $(c_1, \dots,
 c_n)$ where $c_1 \dots c_n$ is the word to maintain. We deal with
  letter substitutions using the transitions of the graph: to update the
  element at position $j$ to the element $v$ we simply replace the
  current node with its $(j,v)$-neighbor.

\item For $n>k$, we denote by $R$ the biggest power of $k$ that is strictly
  less than $n$ and we split the word into $w_0, \dots, w_{\ell}$ such
    that $|w_0| = \dots = |w_{\ell}-1| = R$ and $w_{\ell} \leq R$
    (note that $\ell\leq k$). At
  each step the current node will correspond to
  $(\nu(w_0), \dots, \nu(w_{\ell}))$. For each of the words
  $w_0, \dots, w_{\ell}$ we use a scheme $S_k(R)$ to maintain the
  value $\nu(w_i)$.

  To update $w$ at position $j$, we update the subword $w_i$ with
  $i=\left\lfloor \dfrac{j}{R} \right\rfloor$ at position $j-i\times R$ which
  gives us the new value of $\nu(w_i)$ then we update the current node
  by replacing it with its $(\nu(w_i),i)$-neighbor.\qedhere
\end{itemize}
  \end{proof}
  Notice that initialization time is linear, and each update will make
  $\log_k(n)$ recursive calls each in constant time. To finish, let us recall that,
  for $k=\gamma \times \log n$, the computation of $G_M^k$
  runs $O(n^{1/2})$, and each query and update operation runs in time
  $O(\log_k(n))=O\left(\dfrac{\log n}{\log(\gamma\times\log n)}\right)=O\left(\dfrac{\log n}{\log\log n}\right)$.
  This concludes the proof of Theorem~\ref{thm:ub_ln_lln}.

  In terms of memory usage, the data structure uses $O(n/k)$ that is $O(n/\log
  n)$. Note that this is sublinear thanks to the power of the RAM model. This is
  because nodes at the lowest level compress a factor of size $k$ of the
  original word. In particular, this memory usage is no more than linear in the
  size of the original word.

  We now turn to the prefix problem, and show that the structure can be used
  more generally to support prefix queries. In fact, we will show more generally
  that the structure can handle \emph{infix} queries
  (see Section~\ref{sec:extensions}).
  By lowering the $\gamma$ parameter, we can precompute for each node
$(m_1,\dots, m_\ell)$ of $G_M^k$ and for each $1\leq i \leq j \leq
\ell$ the value $m_{i} \odot \dots \odot m_{j}$. Then, in a scheme
$S_k(n)$, to compute the element of the monoid corresponding to the
factor $(i,j)$, either positions $i$ and $j$ belong to the same
subword (in which case we recurse on this subword), or the factor $(i,j)$
can be decomposed into a strict suffix of some subword $w_{i'}$
followed by a possibly empty contiguous list of subwords followed by a
strict prefix of a subword $w_{j'}$. If they are not empty, we recurse
on $w_{i'}$ and $w_{j'}$ to get the value of the strict suffix and
prefix parts and using our precomputation we get the value for the
list of subwords. By composing the obtained values, we get the element of the
monoid corresponding to the factor $(i,j)$.

Note that an infix query on a scheme might trigger two recursive
calls, but this can only happen if the infix query is not a prefix
nor a suffix query and the two recursive calls it makes are a prefix
query and a suffix query for their respective schemes. Therefore the overall algorithm does have
the expected $O\left(\dfrac{\log n}{\log\log n}\right)$ complexity.
\end{toappendix}

Our goal in this paper is to solve
the dynamic word problem and dynamic membership problem more efficiently for
specific classes of monoids, semigroups, and languages. We start our study with
monoids in the next two sections, by studying the varieties $\SG$ and $\ZG$.

\section{Dynamic Word Problem for Monoids in $\SG$}
\label{sec:sg}
We define the class $\SG$ of monoids by the equation
$x^{\omega+1} y x^\omega = x^\omega y x^{\omega+1}$ for all $x, y$.
It incidentally occurs
in~\cite[Theorem~3.1]{azevedo1990join}, but to our knowledge was not otherwise
studied.
The name $\SG$ means \emph{swappable groups}.
Intuitively, a monoid $M$ is in $\SG$
iff, for any two elements $r, t \in M$ belonging to the same subgroup of~$M$,
we can \emph{swap} them, i.e., 
$r s t = t s r$ for all $s \in M$.
We first recall the lower bound from~\cite{skovbjerg1997dynamic} on the dynamic
word problem for monoids \emph{not in} $\SG$, and then show an upper bound for monoids
in~$\SG$.

\subparagraph*{Lower bound.}
The monoids not in~$\SG$ are in fact those covered by the lower
bound of~\cite{skovbjerg1997dynamic}. Namely, we have the
following, implying the $\Omega(\log n / \log \log n)$ lower bound  by
Theorem~\ref{thm:zzd}:

\begin{theorem}[\cite{skovbjerg1997dynamic}, Theorem~2.5.1]
  \label{thm:lbsg}
  For any monoid $M$ not in~$\SG$, there exists $d \geq 2$ such that 
  the prefix-$\ZZ_d$ problem reduces to the dynamic word
  problem for~$M$.
\end{theorem}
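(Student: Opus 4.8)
The plan is to reconstruct the reduction of~\cite{skovbjerg1997dynamic}. Since $M \notin \SG$, fix $x, y \in M$ witnessing the failure of the defining equation, so $x^{\omega+1} y x^\omega \neq x^\omega y x^{\omega+1}$, and set $e := x^\omega$ (an idempotent), $g := x^{\omega+1}$, and $m_0 := eye$. I would first record the routine facts that $ge = eg = g$, that $g$ has $x^{\omega-1}$ as a two-sided inverse modulo~$e$ (i.e.\ $g x^{\omega-1} = x^{\omega-1} g = e$), so that $g$ lies in the maximal subgroup $G_e$ at~$e$, and that inside the monoid $eMe$ one has $x^{\omega+1} y x^\omega = g m_0$ and $x^\omega y x^{\omega+1} = m_0 g$; hence the hypothesis says precisely that $g$ and $m_0$ \emph{do not commute}. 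Let $d \geq 1$ be the order of~$g$ in $G_e$; then $d \geq 2$, since $d = 1$ would give $g = e$ and make the two products equal.

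Next I would identify the cyclic quotient that the reduction can actually exploit. Put $H := \{k \bmod d : g^k m_0 = m_0 g^k\}$, a subgroup of $\ZZ_d$. As $1 \notin H$ (non-commutation of $g$ and $m_0$), $H$ is proper, so $H = d'\ZZ_d$ for a divisor $d' \geq 2$ of~$d$, which is also the least $k \geq 1$ with $g^k m_0 = m_0 g^k$. The conjugation map $\psi : \ZZ_d \to M$, $t \mapsto g^t m_0 g^{-t}$, then factors through $\ZZ_{d'}$ and is injective there: using $d' \mid d$, one checks $\psi(t) = \psi(t')$ iff $g^{t-t'}$ commutes with $m_0$ iff $d' \mid (t - t')$. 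During preprocessing I would also tabulate, in $O(1)$ space, the map $g^k \mapsto k$ on $\{0,\dots,d-1\}$ and the map $\psi(r) \mapsto r$ on $\{0,\dots,d'-1\}$, both well defined by the above.

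Then I would exhibit a constant-time reduction from prefix-$\ZZ_{d'}$ to the dynamic word problem for~$M$, which proves the statement with the theorem's ``$d$'' taken to be~$d'$, as $d' \geq 2$. Given an input $v = v_1 \cdots v_n \in \ZZ_{d'}^n$ (each $v_j$ read as the integer in $\{0,\dots,d'-1\}$), I build $W \in M^n$ with $W_j := g^{v_j}$, pass it to one oracle structure for the dynamic word problem for~$M$, and also maintain $S_n := (\sum_j v_j) \bmod d$. A substitution update on~$v$ at position~$j$ is mirrored by one oracle update $W_j \leftarrow g^{v_j}$ and an $O(1)$ adjustment of~$S_n$. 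For a prefix query of length~$i$: do the oracle update $W_i \leftarrow g^{v_i} y$, read the oracle's evaluation, then restore $W_i \leftarrow g^{v_i}$; since $g^S e = e g^S = g^S$ for every~$S$ and $eye = m_0$, the value read is $g^{S_i} m_0\, g^{(S_n - S_i) \bmod d}$ with $S_i := (\sum_{j \leq i} v_j) \bmod d$, so right-multiplying by the $G_e$-inverse of $g^{S_n}$ yields $g^{S_i} m_0 g^{-S_i} = \psi(S_i)$, and the $\psi$-table returns $S_i \bmod d' = (\sum_{j\leq i} v_j) \bmod d'$, the desired prefix value. Each operation does $O(1)$ local work and $O(1)$ oracle operations, and the preprocessing is linear, so this is a constant-time reduction; combined with Theorem~\ref{thm:zzd} it yields the $\Omega(\log n / \log\log n)$ bound.

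The crux, and the step I expect to be the main obstacle, is the second paragraph: the dynamic word problem only returns a \emph{global} product, so prefix information has to be extracted by temporarily planting~$y$ at the queried position and reading how conjugation by the group element~$g$ acts; one must then accept that this conjugation may collapse $\ZZ_d$ onto a proper cyclic quotient $\ZZ_{d'}$, which is harmless because the statement only asks for \emph{some} $d \geq 2$ and because $d' \mid d$ lets $\ZZ_{d'}$-words be encoded through~$\ZZ_d$ without disturbing prefix sums.
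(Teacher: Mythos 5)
Your proposal is correct: the paper itself does not prove this statement (it is imported from \cite{skovbjerg1997dynamic}, Theorem~2.5.1), and your reconstruction — taking $e=x^\omega$, $g=x^{\omega+1}\in G_e$, $m_0=x^\omega y x^\omega$, observing that failure of the $\SG$ equation means exactly $gm_0\neq m_0 g$, and reading prefix sums through the conjugation map $t\mapsto g^t m_0 g^{-t}$, which collapses $\ZZ_d$ onto a cyclic quotient $\ZZ_{d'}$ with $d'\geq 2$ — is a sound, self-contained proof in the same spirit as the cited reduction. Two minor points worth making explicit: you must use the convention $g^0:=e$ (not the identity of~$M$) when encoding $v_j\mapsto g^{v_j}$, since your identity $g^S e=eg^S=g^S$ relies on it; and for the boundary query $i=n$ the value read is $g^{S_n}y$ rather than $g^{S_n}m_0g^0$, but right-multiplying by the $G_e$-inverse of $g^{S_n}$ still produces $\psi(S_n)$ because that inverse absorbs the missing idempotent, so the extraction is correct in all cases (alternatively, plant $g^{v_i}m_0$ instead of $g^{v_i}y$ to make the stated formula exact).
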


\subparagraph*{Upper bound.}
The rest of this section presents our upper bound on monoids in~$\SG$.
In fact, we show a more general claim on the dynamic word 
problem for \emph{semigroups} in~$\SG$, i.e., those satisfying the equation
of~$\SG$. This covers in particular the monoids
of~$\SG$:

\begin{theorem}
  \label{thm:ubsg}
  The dynamic word problem for any semigroup in~$\SG$ is in $O(\log \log n)$.
\end{theorem}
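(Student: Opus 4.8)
The plan is to prove Theorem~\ref{thm:ubsg} by induction on the $\mathcal{J}$-class structure of the semigroup $S \in \SG$, reducing the dynamic word problem for $S$ to a combination of: (i) the dynamic word problem for proper quotients and subsemigroups of $S$ (which we may assume tractable by induction, via Proposition~\ref{prp:closure}), (ii) the dynamic word problem for commutative monoids (in $O(1)$), and (iii) a van~Emde~Boas / predecessor data structure supplying the $O(\log\log n)$ term. The starting observation is that if $S$ has a zero or a top $\mathcal{J}$-class that is not regular, one can ``peel off'' that $\mathcal{J}$-class: the evaluation of the current word either stays inside a proper subsemigroup $S'$ (when no letter mapping outside $S'$ occurs, or when such letters occur but the product still collapses), or it is governed by the first and last occurrences of certain ``dangerous'' letters together with what happens between them. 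The exact bookkeeping of these first/last occurrences is what the predecessor structure maintains, and this is where the $\log\log n$ cost enters; everything else is $O(1)$ per update plus a bounded number of oracle calls.

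Concretely, first I would set up the induction on $\card{S}$, handling the base case (one- or two-element semigroups) directly. For the inductive step, I would distinguish according to the top of the $\mathcal{J}$-order. If the maximal $\mathcal{J}$-classes are all regular and form a nice ideal, I expect to use the $\SG$ equation $x^{\omega+1}yx^\omega = x^\omega y x^{\omega+1}$ to show that products landing in a regular $\mathcal{J}$-class behave ``commutatively enough'': the group-part of the product is determined by data that can be aggregated position-independently, so a commutative-monoid oracle (plus the predecessor structure to locate the relevant window) suffices. If instead there is a non-regular maximal $\mathcal{J}$-class or a zero, I would show the value of $w$ is either computed inside $S \setminus (\text{that class})$ — a proper subsemigroup, handled by induction — or else the word ``uses'' a letter whose image is in that class, in which case I track the leftmost and rightmost such positions with the predecessor structure, recurse on the three resulting factors (left part, middle, right part), and compose. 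The middle factor may need a further quotient argument so that it again falls under the induction hypothesis.

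The main obstacle I anticipate is exactly the case analysis on $\mathcal{J}$-classes and making the recursion actually decrease the semigroup: naively splitting at first/last occurrences gives infixes that still must be evaluated in all of $S$, so one needs to argue that the relevant infix evaluations can be computed in a proper quotient or subsemigroup, which is where the $\SG$ equation does the real work (it is what guarantees that the ``middle'' contributes only through a commutative summary rather than through its full noncommutative structure). A secondary subtlety is that the predecessor structure must support not just ``leftmost/rightmost occurrence of a letter'' but ``leftmost/rightmost occurrence of a letter from a given subset,'' and one must maintain $\card{\Sigma}$-many such structures while keeping updates $O(\log\log n)$; this is routine but needs to be stated. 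I would close by checking that linear preprocessing is enough: the $\mathcal{J}$-class decomposition of the fixed semigroup is precomputed (constant, since $S$ is fixed), the van~Emde~Boas trees are built in $O(n)$, and the recursive schemes are initialized bottom-up in linear total time.
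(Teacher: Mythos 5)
Your high-level skeleton matches the paper's: induction peeling off a maximal $\mathcal{J}$-class, a case split on whether that class is regular, the $\SG$ equation forcing the group part to behave commutatively, and van Emde Boas trees supplying the $O(\log\log n)$ term. But the actual reduction mechanism is missing, and the mechanism you do sketch has a genuine gap. Your plan to track the leftmost/rightmost occurrences of letters mapping into the maximal class $C$ and to recurse on the three resulting factors does not work as stated: those delimiting positions move under substitutions, so maintaining the evaluation of the ``left part'' up to a moving endpoint is essentially a prefix-evaluation problem, which for a general semigroup costs $\Omega(\log n/\log\log n)$ -- exactly the bound you are trying to beat. You acknowledge that the infix evaluations must somehow be pushed into a smaller semigroup ``where the $\SG$ equation does the real work,'' but that is precisely the step you never supply, and it is where the entire difficulty lives. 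The paper instead changes the inductive statement: it generalizes the problem to words with gaps represented by a vEB, supporting insertions and deletions of positions, and reduces the problem for $S$ on such words to the problem for the ideal $S\setminus C$ on such words; the derived word over $S\setminus C$ (obtained by collapsing $C$-material) is itself maintained as a vEB, each update to the original word translating into $O(1)$ vEB operations at cost $O(\log\log n)$ each. Without some such strengthening of the induction hypothesis, the reduction from $S$ to $S\setminus C$ cannot even be formulated, since the derived word's support changes under updates.

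Two further points. In the regular case, the decisive difficulty is that an update can split a maximal run of $C$-elements in two, and one cannot compute the group component of each half (that would again be a prefix problem, of prefix-$\ZZ_d$ type). The paper resolves this via the Rees--Sushkevich representation of $C^0$: the $\SG$ equation implies the structuring group $G$ is commutative and, more importantly, that a $G$-annotation can be moved past other material without changing the global evaluation; hence after a split one may simply place the entire annotation on the left half. Your phrase ``aggregated position-independently'' gestures at this, but the key realization -- that the annotation need not be split correctly at all -- is absent, and without it the scheme stalls. Conversely, you overcomplicate the non-regular case: there, the product of any two elements of $C$ already falls outside $C$ (this follows from maximality and non-regularity), so it suffices to group adjacent letters into blocks of two or three, whose products lie in $S\setminus C$; no first/last-occurrence machinery is needed, and the predecessor structure's role is to maintain the grouped word under the induced insertions and deletions rather than to locate occurrences of letter subsets.
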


This result extends the result of \cite{skovbjerg1997dynamic} on group-free
monoids, because  $\SG$ contains all aperiodic monoids and all commutative monoids.
Indeed, an aperiodic monoid satisfies the equation
$x^{\omega+1}=x^\omega$, and then
$x^{\omega + 1} y x^\omega = x^\omega y x^\omega = x^\omega y x^{\omega+1}$.
Besides, commutative monoids clearly satisfy the equation.
Of course, $\SG$ captures monoids not covered by~\cite{skovbjerg1997dynamic},
e.g., products of a commutative monoid and an aperiodic monoid.

The result of \cite{skovbjerg1997dynamic} uses van Emde
Boas trees~\cite{van1976design}, which we 
extend to store values in an alphabet~$\Sigma$. Fixing
an alphabet~$\Sigma$, a \emph{vEB tree} (or \emph{vEB}) is a data structure
parametrized by an integer~$n$ called its
\emph{span}, which stores a
set $X \subseteq \{1,\ldots, n\}$ and a mapping $\mu\colon X\to\Sigma$,
and supports the following
operations:
\emph{inserting} an integer $x \in \{1, \ldots, n\} \setminus X$
with a label~$\mu(x) \colonequals a$; 
\emph{retrieving} the label of~$x \in \{1, \ldots, n\}$
if $x \in X$ (or a special value if $x \notin X$);
\emph{removing} an integer~$x \in X$ and its label;
\emph{finding the next integer} of $X$ that follows
an input $x \in \{1, \ldots, n\}$ (or a special value if none exists);
and \emph{finding the previous integer}.

\begin{toappendix}
\subsection{Details on van Emde Boas Trees}

In this appendix, we give some details about the vEB data structure. It supports
the following operations:
\begin{itemize}
  \item \texttt{insert}($x$, $a$) that inserts the integer $x\in \{1,\ldots, n\}$
    in $X$ (with $x \notin X$) and sets $\mu(x) \colonequals a$;
    \item \texttt{delete}($x$) for $x \in X$ that removes the integer $x$ from
      $X$ (and from the domain of~$\mu$);
    \item \texttt{retrieve}($x$) that returns the value $\mu(x)$ if $x$ is in
      $X$ and a special value otherwise.
    \item \texttt{findPrev}($y$) for $y \in \{1, \ldots, n\}$
that returns the biggest $x\in X$ such that
$x\leq y$, and returns a special value if no such element exists, i.e., $y$ is
smaller than all elements in $X$;
    \item \texttt{findNext}($y$) that works like \texttt{findPrev} but
      returns the smallest $x \in X$ such that $x \geq y$.
\end{itemize}

The original definition~\cite{van1976design} of vEB creates a tree
with $O(n \times \log \log n)$ cells and $O(n \times \log \log n)$
initialization time. We present here an adaptation of the vEB tree
that reduces the memory usage and the initialization to $O(n)$, to guarantee
  that our preprocessing is linear.

\subparagraph*{Linear-time initialization of vEBs.}
  For this, we will reduce the predecessor problem over the range $\{1,\ldots,n\}$
  to the predecessor problem over the range $\{1,\ldots,K(n)\}$ with
$K(x)=\lceil{x/\log\log n}\rceil$.
  Note that the division here can be performed in~$O(1)$, for instance by
  filling sequentially a table for its results over $\{1, \ldots, n\}$, as the
  arguments will always be in that range.

The idea is to create an array $T$ of size $n$ that will store the mapping $\mu$
  and a vEB $V$ storing keys over the range $\{1,\ldots,K(n)\}$.

If we want to initialize our modified vEB with an empty set, we start
with $V$ empty and $T$ filled with a special value $\bot$ indicating
that the domain of the associative array is empty. We then perform the
vEB operations on this structure in the following way:
\begin{itemize}
\item  for an operation \texttt{insert}($x,a$) we set
$T[x]:=a$ and we insert into $V$ the integer $K(x)$ (if it is not
already present);
\item for an operation \texttt{retrieve}{$(x)$} we return $T[x]$;
\item for an operation \texttt{delete}($x$) we set
$T[x]:=\bot$ and if there is no $y\in K^{-1}(K(x))$ such that
$T[y]\neq \bot$ we call \texttt{delete}$(K(x))$ on $V$;
\item for  \texttt{findPrev}$(x)$, we start by looking among the
 $y\in K^{-1}(K(x))$ with $y\leq x$ and $T[y]\neq \bot$. If no such
$y$ exists, we set $p$ to the result of \texttt{findPrev}$(K(x)-1)$ on
$V$. If $p=\bot$, it means that $x$ has no predecessor and if $p$ is
set we find the predecessor of $x$ among the elements of $K^{-1}(p)$;
\item a successor query can be done in a similar fashion.
\end{itemize}

To initialize the modified vEB with a set $X$ in $O(n)$ storing the
  function $\mu$, we do something similar to performing $\card{X}$ insertions
except that we first do all the modifications in $T$ without inserting
in $V$ and afterwards we do all the insertions in $V$, making sure
that each key is inserted once. The total initialization time is thus
$O(n)$ to modify in $T$ followed by at most $K(n)$ insertions in $V$
each taking $O(\log \log n)$. Thus, the initialization time is $O(n)$ overall.

Note that, by construction, for all $x$ we have that $|K^{-1}(x)|
= \lceil \log \log n \rceil $ and thus all operations on our new vEB
run in $O(\log \log n)$.

\subsection{Proof of Theorem~\ref{thm:ubsg}}
\end{toappendix}

We can implement vEBs so that these five operations
run in $O(\log \log n)$ time in the worst case,
and so that a vEB can be constructed in linear time from an ordered list.

We use vEBs in our inductive proof to represent words with ``gaps'': a vEB represents the word obtained by concatenating
the labels of the elements of~$X$ in order.
For a semigroup~$S$
and span~$n \in \NN$, the \emph{dynamic word problem on vEBs} for~$S$ is to
maintain
a vEB $T$ of span~$n$ on alphabet~$S$ under insertions and deletions, and to
answer queries asking the evaluation in~$S$ of the word currently represented by~$T$.
As before, the complexity is the worst-case complexity of an insertion,
deletion, or query, measured as a function of the span~$n$ (which never
changes).
The data structure for this problem must be initialized during a preprocessing
phase on the initial vEB~$T$, which must run in~$O(n)$.
Note that when $T$ is empty then its evaluation is not expressible in the
semigroup~$S$: we then return a special value.

It is then clear that Theorem~\ref{thm:ubsg} follows from its
generalization to vEBs, as a word in the usual sense can be converted in linear
time to a vEB where $X = \{1, \ldots, n\}$:

\begin{toappendix}
  We will show the generalization from which Theorem~\ref{thm:ubsg} obviously
  follows, namely:
\end{toappendix}

\begin{theoremrep}
  \label{thm:ubsg2}
  Let $S$ be a semigroup in~$\SG$.
  The dynamic word problem for~$S$ on a vEB of span~$n$ is in
  $O(\log \log n)$.
\end{theoremrep}

\begin{toappendix}
We start by a preliminary remark pointing out that $\SG$ is not equal to~$\Com
  \lor \A$, as is illustrated in Figure~\ref{fig:bestiaire}:
\begin{remark}
	\label{rem:sgnotjoin}
Remark that $\SG$ contains both $\Com$ and $\A$, the variety of aperiodic monoids.
From~\cite{skovbjerg1997dynamic}, we know that both those varieties have a dynamic
word problem in $O(\log\log n)$. Hence, their \emph{join} $\Com \lor \A$, i.e.,
  the variety that they generate (which is not illustrated in
  Figure~\ref{fig:bestiaire}),
also has a word problem in $O(\log\log n)$ since the complexity of this problem
  is preserved by the operations of a variety (Proposition~\ref{prp:closure}).

However, $\SG$ is in fact not equal to this variety.
First remark that both $\Com$ and $\A$ monoids satisfy the equation
$(xyz)^{\omega+1}(xzy)^{\omega} = (xyz)^\omega (xzy)^{\omega+1}$. Therefore,
the variety they generate also satisfies this equation.
However, the syntactic monoid $M$ of the language $L \colonequals
  ((abc)^2)^*((acb)^2)^*$ does not satisfy it:
  taking $x \colonequals a$, $y \colonequals b$, $z \colonequals
  c$, the equation is not satisfied because 
  the words $(abc)^3(acb)^2$ and $(abc)^2(acb)^3$ clearly do not achieve the
  same element of~$M$ (the first one is a suffix of a word of~$L$ and the second
  is not). Still, $M$ is in $\SG$: this is simply by noticing that elements of
  the form $x^\omega y x^{\omega+1}$ always evaluate to a zero (i.e., they
  correspond to words that cannot be a factor of a word of~$L$), unless in two
  cases: (1.) $x^\omega$ corresponds to $(abc)^2$, $(bca)^2$, or $(cab)^2$, and then
  $y$ must respectively correspond to a power of $abc$, $bca$, $cab$, and the equation holds;
  or (2.) $x^\omega$ is $(acb)^2$, $(cba)^2$, or $(bac)^2$, and
  likewise $y$ must respectively correspond to a power of~$acb$, $cba$, $bac$, and the
  equation holds again.
  Thus, the
  dynamic membership problem for~$L$ is in $O(\log\log n)$
  by our results.
\end{remark}

  This remark will be refined in the next appendix section (see
  Remark~\ref{rem:r2}) to show that $\SG$ is also larger than $\Com \lor \ZG$,
  for the class $\ZG$ defined in Section~\ref{sec:zg}.

  In the rest of this appendix, we successively prove the results needed for the proof of Theorem~\ref{thm:ubsg}.
\end{toappendix}

We show this result in the rest of the section. We assume without loss of
generality 
that $S$ has a zero, as otherwise we can simply
add one. We first introduce some algebraic preliminaries, and then
present the proof, which is an induction on $\calJ$-classes of the semigroup.

\subparagraph*{Preliminaries and proof structure.}
The
\emph{$\calJ$-order}
of~$S$ is the preorder $\leq_\calJ$
on~$S$ defined by $s \leq_\calJ s'$ if $S^1 s S^1 \subseteq S^1 s' S^1$,
recalling that $S^1$ is the monoid where we add a neutral element to~$S$ if it
does not already have one.
The equivalence classes of the symmetric closure of this preorder are called \emph{$\calJ$-classes}. We 
lift the $\calJ$-order to $\calJ$-classes $C, C'$ by writing $C \leq_\calJ C'$
if $u \leq_\calJ v$ for all $u \in C$ and $u' \in C'$.
A $\calJ$-class is \emph{maximal} if it is maximal for this order.

We show Theorem~\ref{thm:ubsg2} by induction on the number of $\calJ$-classes of the
semigroup. More precisely, at every step, we consider a maximal
$\calJ$-class $C$, and remove it by reducing to the
semigroup $S \setminus C$.
Remember that the number of classes only depends on the fixed semigroup $S$, so
it is constant.
Thus, as the constant number of operations
on vEBs at each class each take time
$O(\log \log n)$, the overall bound is indeed in~$O(\log \log n)$.

The base case of the induction is that of a semigroup with a single
$\calJ$-class; from our assumption that the semigroup has a zero, that
$\calJ$-class must then consist of the zero, i.e., we have the trivial monoid
$\{0\}$, and the image is always $0$ (or undefined if the word is empty).

We now show the induction step of Theorem~\ref{thm:ubsg2}.
Take a semigroup $S$ with more than one $\calJ$-class, and 
fix a maximal $\calJ$-class $C$ of~$S$:
we know that $S \setminus C$ is not empty. What is more:

\begin{claimrep}
  \label{clm:decomp}
  For any $x, y$ of~$S$ with $xy \in C$, then $x\in C$ and $y \in
  C$.
\end{claimrep}

\begin{proof}
  Note that we have $S xy S
  \subseteq S x S$: this is because any element of $S xy S$, say $s xy t$ for
  $s, t \in S$, can be written as $s x (yt)$ so it is in $S x S$ also. Thus, we
  have $xy \leq x$. But as $C$ is a maximal $\calJ$-class, we must have $x \leq
  xy$. So $x$ and $xy$ are in the same $\calJ$-class and $x \in C$. The same reasoning shows that $y \in C$.
\end{proof}

Thus, whenever a combination of elements ``falls'' outside of the maximal
class~$C$, then it remains in~$S\setminus C$; and we can see $S \setminus C$ as
a semigroup, which still has a zero, and has strictly less $\calJ$-classes. So
we will study how to reduce to $S \setminus C$.
We now make a case disjunction depending on whether $C$ is \emph{regular}, i.e.,
whether it contains an idempotent element.

\subparagraph*{Non-regular maximal classes.}
This case is easy, because
products of
two or more elements of~$C$ are never in~$C$.
To formalize this property, for a maximal $\calJ$-class $C$ of~$S$, we call a word~$w$
on~$S^*$ \emph{pair-collapsing} for~$C$ if the product of any two adjacent
letters of~$w$ is in~$S \setminus C$. We show:

\begin{lemmarep}
  \label{lem:alwayscollapse}
  Let $C$ be a maximal $\calJ$-class. If $C$ is non-regular, then any word is
  pair-collapsing: for any $x, y \in C$, we have $x y \in S \setminus C$.
\end{lemmarep}

\begin{proof}
  We proceed by contraposition and show that if there exist $x, y \in C$ such that $x y
  \in C$, then~$C$ is regular.

  Assume that we have $x, y \in C$ such that $x y \in C$. We
  then know that $x$ and $xy$ are in the same $\calJ$-class, so $S x S = S xy
  S$. Thus, as $x \in S xy S$, there
  exists $s, t \in S$ such that $s x y t = x$. By reinjecting the left-hand side
  in itself, this equation implies that $s^i x (yt)^i = x$ for all $i \in \NN$.
  Let $\omega$ be the idempotent power of~$S$. We have $s^\omega x (yt)^\omega =
  x$. As $x \in C$, Claim~\ref{clm:decomp} implies that $s^\omega \in C$.
  Hence, $C$ contains an idempotent element, so it is regular.
\end{proof}

We can then show the following, which we will reuse for
regular maximal classes:

\begin{lemmarep}
  \label{lem:indnonreg}
  Let $S$ be a semigroup
  and let $C$ be a maximal $\calJ$-class of~$S$.
  Consider the dynamic word problem for~$S$ on vEBs of some span~$n$ where we assume that, at
  every step, the represented word is pair-collapsing for~$C$. Then that problem
  reduces to the dynamic word problem
  for~$S \setminus C$ on vEBs of span~$n$.
\end{lemmarep}

\begin{proofsketch}
  We group adjacent letters of the word into groups of $\geq 2$ letters (into a new vEB of
  the same span), so that every
  group is in~$S \setminus C$, and we can perform the evaluation with a
  structure for the dynamic word problem for $S \setminus C$.
  When handling updates, we maintain a constant bound on the size of the
  groups, without introducing singleton groups.
\end{proofsketch}

Thanks to Lemma~\ref{lem:alwayscollapse}, this allows us to settle the case of a non-regular
maximal $\calJ$-class, using the induction hypothesis to maintain the
problem on $S\setminus C$.

\begin{proof}
  We will always refer to~$w$ to mean the word on~$S$ (represented as a vEB with
  some span), and $w'$ the word on~$S \setminus C$ (represented as a vEB with
  the same span). Recall that the case where $w$ is empty is special (as the
  image may not be representable in the semigroup~$S$). The case where $w$
  contains only a single element is also special, as the result may not be in $S
  \setminus C$. Up to maintaining a count of the number of letters of~$w$, we
  can handle this special case easily: when the number of letters of~$w$ becomes
  equal to~$1$, we locate the remaining letter using the vEB, and this gives us
  immediately the evaluation of~$w$. So we assume in the sequel that $w$
  contains at least 2 elements.

  We will maintain a function $\psi$, called a \emph{position mapping},
  from the positions of~$w$ to those of~$w'$, with the following requirements:
  \begin{itemize}
    \item $\psi$ is surjective: every position of $w'$ is reached by $\psi$
    \item $\psi$ is nondecreasing
    \item $\psi$ is bounded injective: there are at most 3 positions of $w$
      mapping to any position of~$w'$
    \item $\psi$ is pair-grouping: there are at least 2 positions of $w$
      mapping to any position of~$w'$
    \item $\psi$ preserves evaluations: for every position $(i, a)$ of
      $w'$, the letter $a$ in~$S'$ is exactly the composition of
      the letters $a_1, \ldots, a_n$ of the pairs $(i_1, a_1), \ldots, (i_n,
      a_n)$ of the vEB~$w$ that are mapped to~$(i,a)$ by~$\psi$: because
      $\psi$ is nondecreasing, this is a segment of successive letters in~$w$
      (and $2 \leq n \leq 3$)
    \item For any letter $(i, a)$ in the image of $\psi$, $i$ is the index of
      the last letter of~$w$ that is mapped to~$i$ by~$\psi$
  \end{itemize}
  Thanks to the fourth condition, as the word is pair-collapsing,
  we know that the fifth condition indeed yields an evaluation in~$S \setminus
  C$.
  Also note that the position mapping (specifically conditions 1, 2 and 5)
  guarantees that $w$ and $w'$ indeed evaluate to the same monoid element,
  ensuring the correctness of the reduction: the answer to the dynamic word
  problem for~$S$ on~$w$ is the same as for~$S'$ on~$w'$. So it only remains to
  explain how we can construct $\psi$ and $w'$ from~$w$ (preprocessing), and how
  we can maintain $\psi$ and $w'$ under updates to~$w$.

  For the preprocessing,
  to initialize $\psi$, we traverse $w$ sequentially (this can be done in linear
  time if we simply assume that vEBs on span $\{1, \ldots, n\}$ always store
  their contents in an array maintained in parallel to the structure), and
  construct $w'$ and $\psi$ sequentially, by grouping the letters of~$w$ in
  groups of~$2$ or possibly~$3$ for the last group (to avoid leaving one letter
  alone). This can clearly be done in time $O(n)$ with $n$ the span of~$w$.

  To maintain $\psi$ under insertions, we find a predecessor or successor of the
  new letter in~$w$ using the vEB, we set
  the image of the new letter by~$\psi$ to be that of the predecessor or
  successor, and update the image letter in $w'$ to reflect this, in
  constant time because $\psi$ is bounded injective. The only problem is that
  now the new $\psi$ may not be bounded injective anymore, because we could have
  4 letters of~$w$ (including the new one) mapping to a position of~$w'$.
  If this happens, we split the group by inserting a letter in $w'$ for the
  two first letters of~$w$ that mapped there (put at the position of the second
  letter to satisfy condition 6) and updating the letter of~$w'$ at the
  position of the fourth letter (by condition 6) to be that of the two last
  letters. All of this can be performed with constantly many updates on~$w'$ and
  constantly many operations on the vEB~$w$.

  To maintain $\psi$ under deletions, we look at the image of the deleted letter
  by~$\psi$, we find the other letters which are in the same group (i.e.,
  constantly many neighbors, which we can find using the vEB $w$), we delete all
  letters of the group along with the group and modify~$\psi$ (preserving the
  invariant), and then we insert back the letters of the group that we did not
  intend to delete, as explained in the previous paragraph. This concludes the
  proof.
\end{proof}

\subparagraph*{Case of a regular maximal class.}
We now consider a maximal $\calJ$-class $C$ that is regular. Consider
the semigroup $C^0 \colonequals C \cup \{0\}$ for a fresh zero~$0$, i.e., the
multiplication is that of~$C$ except that $x 0 = 0 x = 0$ for all $x \in C^0$,
and that $x y = 0$ in~$C^0$ for all $x, y \in C$ for which the product $x y$
in~$S$ is not an element of~$C$.
Note that $0$ is unrelated to the zero
which $S$ was assumed to have; intuitively, the $0$ of~$C^0$ stands for
combinations of elements that are not in~$C$.
Another way to see $C^0$ is as the quotient of $S$ by the ideal $S \setminus C$, i.e., we identify all elements of $S\setminus C$ to~$0$.
By Prop.~4.35 of
Chapter~V of~\cite{mpri}, we know that $C^0$ is a so-called \emph{0-simple
semigroup}.
By the Rees-Sushkevich theorem (Theorem 3.33 
of~\cite{mpri}),
$S$ is isomorphic to some \emph{Rees
matrix semigroup with~0}. This is a semigroup $M^0(G, I, J, P)$ with $I$ and $J$
two non-empty sets, $G$ a group called the \emph{structuring group}, and $P$ a
matrix indexed by $J \times I$ having values in $G^0$. The elements of the semigroup
are the elements of $I \times G \times J$ and the element~$0$, with $x 0 = 0x = 0$ for any
element $x \in I \times G \times J$, and for $(i,g,j)$ and $(i', g', j')$ two
elements of $I \times G \times J$, their product is $0$ if $p_{j,i'} = 0$, and
$(i ,g p_{j,i'}g', j')$ otherwise.

With this representation, the idea is to collapse together the
maximal runs of consecutive elements of~$C^0$ whose product is not~$0$,
i.e., does not ``fall'' outside of~$C$. Once this is done,
the product of two elements always falls
in~$S \setminus C$, 
so we can conclude using Lemma~\ref{lem:indnonreg}.

However, we cannot do this in a naive fashion. For instance, if we insert a
letter in the vEB in the middle of such a maximal run,
we cannot
hope to split the run and know the exact group annotation of the two new runs -- this
could amount to solving a prefix-$\ZZ_d$ problem. Instead, we must now use
the fact that $S$ is in~$\SG$, and derive the consequences of the equation in terms of the
Rees-Sushkevich representation. Intuitively, the equation ensures that the 
structuring group $G$ is commutative, and that annotations in~$G$ can ``move around'' relative
to other elements in~$S$ without changing the evaluation. Formally:

\begin{claimrep}
  \label{clm:commutative}
  The structuring group~$G$ is commutative.
\end{claimrep}

\begin{proof}
  As $C$ is regular, let $(i, g, j)$ be an idempotent of~$C$. We have $(i, g, j)
  = (i, g, j)^2 = (i, g p_{j,i} g, j)$, so $g = g p_{j,i} g$, so in $G$ we have
  $g = p_{j,i}^{-1}$. This implies in particular that $p_{j,i} \neq 0$, and thus
  our idempotent is of the form $(i, p_{j,i}^{-1}, j)$.

  Let us write $H_{i,j} = \{(i, g, j) \mid g \in
  G\}$: this subset is closed under the
  semigroup operation because $p_{j,i} \neq 0$, so it is a subsemigroup of~$C^0$, and
  in fact it is a group, with neutral element $\eta \colonequals (i, p_{j,i}^{-1}, j)$, and with
  $(i, p_{j,i}^{-1} g^{-1} p_{j,i}^{-1}, j)$ the inverse
  of~$(i, g, j)$. Indeed, for any $(i,
  g, j)$ we have $(i, g, j) (i, p_{j,i}^{-1}, j) = (i, p_{j,i}^{-1}, j) (i, g,
  j) = (i, g, j)$ by the Rees law,
  so indeed $\eta = (i, p_{j,i}^{-1}, j)$ is neutral.
  Further, we have $(i,
  g, j) (i, p_{j,i}^{-1} g^{-1} p_{j,i}^{-1}, j) = (i, p_{j,i}^{-1} g^{-1}
  p_{j,i}^{-1}, j) (i, g, j) = (i, p_{j,i}^{-1}, j) = \eta$ again by the Rees law, so indeed the inverses are
  as described.

  Let us show that the group $H_{i,j}$ is commutative.
  To see why, take any $g, g' \in G$, and take $x = (i, g, j)$ and $y = (i, g',
  j)$. Let us show that $xy = yx$. Applying the equation to~$S$, we know that
  $x^{\omega+1}y x^\omega = x^\omega y x^{\omega+1}$, where $\omega$ is the
  idempotent power of~$x$.
  Now, $x^\omega = x^{2\omega}$ by definition, and in the group
  $H_{i,j}$ this implies that $x^{\omega} = \eta$, by composing the equation with $(x^\omega)^{-1}$. Thus, injecting
  this in the equation yields $x y = y x$, as claimed. Thus, $H_{i, j}$ is
  commutative.

  We now show this implies that~$G$ is a commutative group. Let us first
  note that, for any $g \in G$,  letting $1$ be the neutral element of~$G$,
  as $(i,g, j) (i, 1, j) = (i, 1, j) (i, g, j)$ because $H_{i,j}$ is
  commutative, we have $g p_{j,i} = p_{j,i} g$. Now, take any $g, g' \in G$, and
  let us show that $g g' = g' g$.
  We have $(i, g, j) (i, g', j) = (i, g p_{j,i} g', j) = (i, p_{j,i} g g', j)$ by what we
  just argued. As $H_{i,j}$ is commutative, the latter is also equal to
  $(i, g', j) (i, g, j)$, which is equal to $(i, p_{j,i} g' g, j)$. Identifying
  the two and composing by $p_{j,i}^{-1}$, which is possible because we argued
  that $p_{j,i} \neq 0$, we obtain $g g' = g' g$. So indeed
  $G$ is commutative.
\end{proof}

\begin{claimrep}
  \label{clm:oneswap}
  Let $r, s, t \in S^*$ and $(i, g, j), (i', g', j') \in I \times G \times
  J$. Write $w = r (i, g, j) s (i', g', j') t$
  and $w' = r (i, g g', j) s (i', 1, j') t$ where $1$ is the neutral element
  of~$G$. Then $\eval(w) = \eval(w')$.
\end{claimrep}

\begin{proof}
  Let $C$ be the regular maximal $\calJ$-class under consideration
  As $C$ is regular, let us consider an idempotent in~$C$. By the same reasoning
  as the first paragraph of the proof of Claim~\ref{clm:commutative}, we know
  that the idempotent is of the form $(a, p_{b,a}^{-1}, b)$ with $p_{b,a}$ being
  nonzero. What is more, like in the proof of Claim~\ref{clm:commutative}, we
  know that $H_{a,b} = \{(a, g, b) \mid g \in G\}$ is a group.

  In all equations that follow, we abuse notation and write equalities between
  words of~$S$ to mean that they evaluate to the same element (not that the
  words are the same).

  Let us now observe that:
  \[
    (i', g', j') = (i', 1, b) (a, p_{b,a}^{-1} g', b) (a, p_{b,a}^{-1}, b) (a,
    p_{b,a}^{-1}, j').\]
  Indeed, this is immediate by the law
  of the Rees semigroup with zero.

  Let us now take $x = (a, p_{b,a}^{-1} g', b)$, and take $\omega$ its idempotent power. We
  have $x^\omega = x^{2\omega}$, and as this operation happens in the group
  $H_{a,b}$ we know that $x^\omega$ is the neutral element of this group, which
  is $(a, p_{b,a}^{-1}, b)$.

  So we can rewrite the above equation as:
  \[
    (i', g', j') = (i', 1, b) x x^\omega (a, p_{b,a}^{-1}, j').
  \]
  Similarly to the first equation, we have \[
    (i, g, j) = (i, p_{b,a}^{-1}, b) (a, g, b) x^\omega (a, p_{b,a}^{-1}, j).
  \]
  So let us take:
  \[
    y = (a, p_{b,a}^{-1}, j) s (i', 1, b)
  \]
  The equation defining $\SG$, applied to~$x$ and~$y$,
  now tells us that $x^{\omega+1} y x^\omega = x^\omega y x^{\omega+1}$.

  So let us now write $w$, i.e.:
  \[
    w = r (i, p_{b,a}^{-1}, b)(a, g, b)  x^{\omega} (a, p_{b,a}^{-1}, j) s (i',
    1, b)  x x^\omega (a, p_{b,a}^{-1}, j') t
  \]
  We recognise the definition of~$y$, so we have:
  \[
    w = r (i, p_{b,a}^{-1}, b)(a, g, b)  x^{\omega} y x^{\omega+1} (a, p_{b,a}^{-1}, j') t
  \]
  Using the equation, we have:
  \[
    w = r (i, p_{b,a}^{-1}, b) (a, g, b) x^{\omega+1} y x^{\omega} (a, p_{b,a}^{-1}, j') t
  \]
  Injecting back the definition of~$y$:
  \[
    w = r (i, p_{b,a}^{-1}, b)(a, g, b)  x^{\omega+1}
    (a, p_{b,a}^{-1}, j) s (i', 1, b)
    x^{\omega} (a, p_{b,a}^{-1}, j') t
  \]
  The part between $r$ and $s$ is:
  \[
    (i, p_{b,a}^{-1}, b)(a, g, b) (a, p_{b,a}^{-1} g', b) (a, p_{b,a}^{-1}, b) (a, p_{b,a}^{-1}, j)
  \]
  which evaluates by the Rees law to $(i, g g', j)$;
  and the part between $s$ and~$t$ evaluates according to the same to $(i', 1, j')$, so we have obtained:
  \[
    w = r (i, g g', j) s (i', 1, j') t
  \]
  Which establishes that $\eval(w) = \eval(w')$ and concludes the proof.
\end{proof}

This allows us to reduce the dynamic word
problem on~$S$ to the same problem where we assume that the word is always 
pair-collapsing:

\begin{claimrep}
  The dynamic word problem for~$S$
  on vEBs (of some span~$n$) reduces to the same problem on vEBs of span~$n$ where we
  additionally require that, at every step, the represented word is
  pair-collapsing for the maximal $\calJ$-class $C$.
\end{claimrep}

\begin{proofsketch}
  We maintain a mapping where all maximal runs of word elements evaluating
  to~$C$ are collapsed to a single element, which we can evaluate following the
  Rees-Sushkevich representation. The tricky case is whenever an update breaks 
  a maximal run 
  into two parts: we cannot recover 
  the $G$-component of the annotation of each part, but we
  use Claim~\ref{clm:oneswap} to argue that we can simply put it on the left
  part without altering the evaluation in~$S$.
\end{proofsketch}

\begin{proof}
  A \emph{maximal run} of a word of~$S^*$ is a non-empty maximal contiguous
  subsequence of elements of the word whose image is in~$C$. We will maintain
  the target word $w'$ as a vEB along with a function $\psi$ from the positions
  of~$w$ to that of~$w'$, with the following requirements:

  \begin{itemize}
    \item $\psi$ is surjective: every position of $w'$ is reached by $\psi$;
    \item $\psi$ is nondecreasing;
    \item For any letter of~$w$ not part of a maximal run, i.e., an element
      of~$S \setminus C$, then the image of this letter by~$\psi$ has the same
      letter and it has has exactly this element as preimage;
    \item For any maximal run of~$w$, all its letters have the same image
      by~$\psi$, the maximal run is precisely the preimage of that element, and
      the label of this maximal run is of the form $(i, g, j)$ for some $g$,
      where $(i, g', j)$ for some~$g'$ is the actual evaluation of this maximal
      run in~$w$;
    \item For any letter $(i, a)$ in the image of $\psi$, $i$ is the index of
      the last letter of~$w$ that is mapped to~$i$ by~$\psi$
  \end{itemize}

  We additionally require that the evaluation of~$w'$ and of~$w$ in~$S$ is the
  same. It is crucial that this condition is enforced globally, not locally, as
  we intend to use Claim~\ref{clm:oneswap}.

  For the preprocessing, given the word~$w$, we process it sequentially (again
  assuming that vEBs also store their contents as an array), and we create the mapping sequentially,
  using the Rees-Sushkevich representation to determine when a maximal run ends.
  This takes time~$O(n)$.

  To handle updates on~$w$, there are several cases:
  \begin{itemize}
    \item When we insert an element of~$S \setminus C$, we check the target vEB
      to know if this insertion happens within a maximal run or not:
      \begin{itemize}
        \item If the insert is not within a maximal run, then we simply reflect
          the same change in~$w'$ and $\psi$.
        \item When we insert an element $x$ of~$S\setminus C$ within a maximal
          run, then the maximal run is broken. We use the vEB~$w$ to find the
          preceding element $(i_-, g_-, j_-)$ and $(i_+, g_+, j_+)$ of~$w$
          relative to the insertion: they are necessarily in~$C$ because we are
          within a maximal run.

          We now replace the element $(i, g, j)$ of~$w'$ which was the image of
          the maximal run by~$\psi$ by three elements: one of the form $(i, g_1,
          j_-)$ corresponding to the remaining prefix of the maximal run,
          one corresponding to the inserted element, and one of the form $(i_+,
          g_2, j)$ corresponding to the remaining suffix of the maximal
          run.  We set $g_1 \colonequals g p_{j_-, i_+}^{-1}$ and $g_2
          \colonequals 1$ for $1$ the neutral element of~$G$.

          We must argue that the evaluation of~$w'$ and of~$w$ is still the
          same. To do this, write the new $w'$ as $r (i, g_1, j_-) x (i_+, g_2,
          j) t$. Consider now $(i, g_1', j_-)$ and $(i_+, g_2', j)$, the
          respective actual evaluations of the prefix and suffix after the
          update. We know thanks to the invariant that $w$ and $w'$ had the same
          evaluation before the update, i.e., $w$ evaluates to the same in~$S$
          as $r (i, g, j) t$, and after the update $w$ evaluates to the same as
          $r (i, g_1', j_-) x (i_+, g_2', j) t$, and by Rees-Sushkevich we have
          $g = g_1' p_{j_-, i_+} g_2'$.
          Instead, our definition of $w'$ evaluates to
          $r (i, g_1 , j_-) x (i_+, 1, j) t$.
          The values $g_1'$ and $g_2'$
          are intuitively the ones that we cannot retrieve from our data
          structure. However, thanks to Claim~\ref{clm:oneswap}, as $g_1 = g
          p_{j_-, i_+}^{-1} = g_1' g_2'$ (using the commutativity of~$G$,
          Claim~\ref{clm:commutative}), we know that $w'$ as we defined it
          evaluates to the correct value. Thus, the invariant is preserved.
      \end{itemize}
    \item When we delete an element of~$S\setminus C$:
      \begin{itemize}
        \item If the deletion does not connect together two maximal runs (i.e.,
          it is preceded and succeeded in~$w'$ by elements that are not both
          in~$C$, or that are both in~$C$ but whose composition in~$C$
          yields~$0$), then we simply reflect the change in~$w'$;
        \item When we delete an element of~$S \setminus C$ and connect together
          two maximal runs, then we update~$w'$ to delete the element and then
          delete the two elements $(i, g, j)$ and $(i', g', j')$ corresponding
          to the two runs by an element $(i, g p_{j,i'} g', j')$ corresponding
          to the new run.
      \end{itemize}
    \item When we insert an element $(i, g, j)$ of~$C$,
      we check $w'$ to distinguish many possible cases:
      \begin{itemize}
        \item If this happens within a maximal run (as ascertained using the vEB
          structure on~$w'$), then let $(i_-, g_-, j_-)$
          and $(i_+, g_+, j_+)$ be respectively the preceding and succeeding
          elements in~$w$, obtained from the vEB of~$w$: they are both in~$C$.
          We update the $G$-annotation of the image by~$\psi$ of this maximal
          run to add $p_{j_-,
          i_+}^{-1}$, i.e., the inverse of the Rees matrix element obtained
          between the preceding and succeeding element: call this step (*). Now:
          \begin{itemize}
            \item If both $p_{j_-,i}$ and $p_{j,i_+}$ are nonzero then we simply
              update the $G$-annotation again to add $p_{j_-, i}$ and
              $p_{j, i_+}$.
            \item If both $p_{j_-,i}$ and $p_{j,i_+}$ are zero, then we break
              the maximal run in three parts: the part before the insertion, the
              insertion which is a maximal run of its own, and the part after
              the insertion. We replace the element $(i',g',j')$ in
              $w'$ (with $g'$ already modified by step (*))
              that corresponded to the whole run by three elements $(i',
              g', j_-)$, $(i, g, j)$, and $(i_+, 1, j')$.
              The preservation of the global invariant is again by
              Claim~\ref{clm:oneswap}.
            \item The two cases where exactly one of $p_{j_-,i}$ and $p_{j,i_+}$
              is zero and the other is nonzero are analogous to the above.
          \end{itemize}
        \item If this does not happen within a maximal run:
          \begin{itemize}
            \item If the preceding
          element is in $S\setminus C$ or is an element of~$C$ which combined
          with~$(i,g,j)$ gives a zero according to the Rees matrix, and the same
          is true of the succeeding element, then $(i,g,j)$ is a new maximal run
          of its own. We insert it as-is in~$w'$
        \item If the preceding
          element is in~$S\setminus C$ or an element of~$C$ which combined
          with~$(i,g,j)$ gives a zero, but the next element $(i_+,g_+,j_+)$ is in~$C$ and
          combining $(i,g,j)$ with it does not give a zero (i.e., $p_{j,i_+}$ is
          nonzero), then we are
          extending the maximal run that follows.
              We modify its $G$-annotation to add $p_{j,i_+} g$, we remove the element
          in~$w'$ corresponding to that consecutive run and add it back to
              the new end position of the extended maximal run, but changed so
              that its $I$-index becomes~$I$.
        \item The case of an insertion extending the maximal run that precedes
          is analogous except that we can simply update the element in~$w'$
              without having to move it, and change its~$J$-index and not
              $I$-index.
        \item If both the preceding element $(i_-, g_-, j_-)$ and succeeding
          elements $(i_+, g_+, j_+)$ are in~$C$
          and neither gives a zero together with the newly inserted element,
              then by our assumption that we are not within a maximal run, it
              must be that $p_{j_-,i_+} = 0$ and we are merging together the two maximal runs of these
              elements. We reflect this in the additional structure, in the
              $G$-annotation (adding the two Rees matrix terms
              $p_{j_-,i}$ and $p_{j,i_+}$ which are nonzero by assumption, in
              addition to~$g$),
              removing in~$w'$ the element for the first run, and copying
              its~$I$-index and group information to the element for the
              original run (now the element for the second run)
          \end{itemize}
      \end{itemize}
    \item When we remove an element $(i,g,j)$ of~$C$, we distinguish between several
      cases:
      \begin{itemize}
        \item Removal within a maximal run which does not change the run
          endpoints, i.e., the preceding and succeeding elements do not combine
          to a zero in the Rees matrix semigroup. Then we simply update the
          $G$-annotation to add the inverse of the two Rees matrix
          entries that are no longer realized and add~$g^{-1}$, and add the new nonzero Rees
          matrix entry which is realized.
        \item Removal within a maximal run which breaks up the run into two
          nonempty runs at the deletion point. We insert the information for
          these two runs, putting the group information of the split run on the
          first run (along with $g^{-1}$ and the inverse $p$-term for the two
          elements that are no longer adjacent) and again use
          Claim~\ref{clm:oneswap} to argue for correctness.
        \item Removal of the first letter of a maximal run. We update the
          $G$-annotation with the inverse of the Rees matrix entry
          which is no longer realized, and update the $I$-index of the end of the
          run.
        \item Removal of the last letter of a maximal run.  This is like the
          previous case except we change the $J$-index instead of the $I$-index,
          and we must move the element in~$w'$ corresponding to the run to
          sit at the new ending position of the run.
        \item Removal that eliminates a singleton maximal run. We simply delete
          it from~$w'$. \qedhere
    \end{itemize}
    \end{itemize}
\end{proof}

This claim together with Lemma~\ref{lem:indnonreg} implies that
the dynamic word problem for~$S$ reduces to the same problem for~$S \setminus
C$, for which we use the induction hypothesis. This establishes the induction
step and concludes the proof of Theorem~\ref{thm:ubsg}.

\section{Dynamic Word Problem for Monoids in~$\ZG$}
\label{sec:zg}
We pursue our study of the dynamic word problem for monoids with
the class $\ZG$, introduced
in~\cite{auinger2000semigroups} 
and defined by the equation $x^{\omega+1} y = y x^{\omega+1}$ for all $x, y$.
This asserts that
elements of the form $x^{\omega +1}$, which are the ones belonging to a
\emph{subgroup} of the monoid, are \emph{central}, i.e.,
commute with all other elements. By the equations, and recalling that
$x^{\omega+1} = x^\omega x^{\omega+1}$, clearly
$\ZG \subseteq \SG$.
In this section, we show an $O(1)$ upper bound on the dynamic word problem for
monoids in~$\ZG$, and a conditional lower bound 
for any monoid not in~$\ZG$.

\begin{toappendix}
We start by another preliminary remark on the relationship between $\SG$, $\ZG$,
  and $\A$ (see Figure~\ref{fig:bestiaire}):
\begin{remark}
  \label{rem:r2}
	Following Remark~\ref{rem:sgnotjoin}, we could be tempted to claim that
	$\SG$ is equal to the variety generated by $\ZG$ and $\A$. However, as
        monoids in $\ZG$
	also satisfy the equation $(xyz)^{\omega+1} (xzy)^\omega = (xyz)^\omega (xzy)^{\omega+1}$, 
	the same argument shows that it is not the case.
\end{remark}
We then provide the omitted proofs for the results in the main text.
  \subsection{Proof of Upper Bound Results}
\end{toappendix}

\subparagraph*{Upper bound.}
Recall the result on commutative monoids from~\cite{skovbjerg1997dynamic}:

\begin{theoremrep}[\cite{skovbjerg1997dynamic}]
  \label{thm:ubcom}
  The dynamic word problem for any commutative monoid is in $O(1)$.
\end{theoremrep}

\begin{proof}
  This result is proved in~\cite{skovbjerg1997dynamic}, but we re-prove it for
  completeness.
  Let $M$ be the monoid and let $n$ be the length of the input
  word~$w$. We simply precompute a table of the powers $x^i$ for
  every $x \in M$ and
  every $0 \leq i \leq n$. Doing so by successive composition
  takes time $O(n)$ (recalling that $M$ is fixed).
  We also count the number
  of occurrences $n_x$ of each monoid element~$x$ in~$w$. Now, the vector $\vec{n}$ can easily be
  maintained in $O(1)$ under letter substitution updates on~$w$: when overwriting $x$ by $y$,
  we increment $n_y$ and decrement $n_x$. Thanks to commutativity, the result of
  the evaluation is then $\prod_{x \in M} x^{n_x}$ which we can evaluate in
  constant time thanks to the precomputed values. This establishes the result.

  In fact, we can even avoid precomputing the tables by recalling that the
  subsets of $\NN^{|\Sigma|}$ of the vectors evaluating to each of the monoid
  elements are \emph{recognizable subsets} which by~\cite[Proposition
  9]{Choffrut06} are defined by a finite number of congruence and threshold
  conditions. These can be checked in~$O(1)$ without using precomputed powers.
\end{proof}

Our goal is to generalize it to the following result:

\begin{theorem}
  \label{thm:ubzg}
  The dynamic word problem for any monoid in~$\ZG$ is in $O(1)$.
\end{theorem}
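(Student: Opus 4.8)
The plan is to combine an algebraic decomposition of $\ZG$ with the already-established constant-time solution for commutative monoids (Theorem~\ref{thm:ubcom}) and with the closure property of Proposition~\ref{prp:closure}. Concretely, I would first invoke the characterization that $\ZG$ is the variety $\Com \lor \MNil$ generated by the commutative monoids together with the monoids of the form $N^1$, where $N$ is a finite \emph{nilpotent} semigroup, i.e.\ one with $N^k = \{0\}$ for some~$k$. This decomposition is exactly what makes the equation $x^{\omega+1}y = y x^{\omega+1}$ tractable, since that equation asks precisely that the ``group part'' of the monoid be central. The inclusion $\Com \lor \MNil \subseteq \ZG$ is an easy equational check (each such $N^1$ has only $1$ and $0$ as elements lying in a subgroup, and both are central; commutative monoids obviously satisfy the equation), while the reverse inclusion $\ZG \subseteq \Com \lor \MNil$ is the substantive algebraic input, which I would cite from~\cite{auinger2000semigroups}. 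Given this, any monoid $M \in \ZG$ divides a finite direct product of a commutative monoid and finitely many monoids $N^1_i$ with $N_i$ nilpotent, so by Proposition~\ref{prp:closure} the dynamic word problem for~$M$ reduces to those of the factors; the commutative factor is in $O(1)$ by Theorem~\ref{thm:ubcom}, so it only remains to solve the dynamic word problem for a monoid $N^1$ with $N$ nilpotent in $O(1)$.

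For that case, fix $k$ with $N^k = \{0\}$; this is a constant. The key observation is that, in a word $w$ over $N^1 = N \cup \{1\}$, the identity letters are irrelevant: $\eval(w) = 1$ if $w$ has no non-identity letter; $\eval(w) = 0$ if $w$ has at least $k$ non-identity letters, since their product lies in $N^k$; and otherwise $\eval(w)$ is the product, read left to right, of the fewer than $k$ non-identity letters of~$w$. So the data structure maintains (i) the number $c$ of non-identity positions, and (ii) for each value $v \in N$ and each exponent $0 \le \ell \le k-1$, the integer $\sigma^{(\ell)}_v = \sum \{\, i^\ell : 1 \le i \le n,\ w_i = v \,\}$. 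Both are updated in $O(1)$ per letter substitution, since only the old and new values of the overwritten position are affected, $k$ is constant, and each $\sigma^{(\ell)}_v$ stays polynomial in~$n$ and hence fits in a constant number of RAM cells. On a query: return $1$ if $c = 0$; return $0$ if $c \ge k$; otherwise $c < k$, and for each value $v$ with $\sigma^{(0)}_v > 0$ we recover, via Newton's identities applied to $\sigma^{(1)}_v,\dots,\sigma^{(c)}_v$, the monic degree-$\sigma^{(0)}_v$ polynomial whose roots are exactly the positions carrying value~$v$; extracting the integer roots in $\{1,\dots,n\}$ of these constantly many fixed-degree polynomially-bounded polynomials gives the position of every non-identity letter, hence the left-to-right sequence of the $c$ non-identity letters, and we output their product in~$N$.

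Two points need care, and are where I expect the real work. First, the inclusion $\ZG \subseteq \Com \lor \MNil$: without it the whole reduction collapses, so I would make sure to state it in the precise form actually used (every $\ZG$ monoid divides a product of a commutative monoid and monoids $N^1$ with $N$ nilpotent) and either cite it cleanly from~\cite{auinger2000semigroups} or re-derive it. Second, and more delicate for the $O(1)$ claim, is the query step in the nilpotent case: turning the maintained power sums into the actual ordered list of non-identity letters must genuinely run in constant time on the RAM, i.e.\ the recovery of the fixed-degree, polynomially-bounded polynomials' integer roots in $\{1,\dots,n\}$ must be a bona fide $O(1)$ computation in this model — this is the ``surprising'' ingredient of an otherwise simple algorithm, and the step I would scrutinize most. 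The remaining obligations are routine bookkeeping: the empty-word special case, and checking that all maintained quantities remain polynomial in~$n$ so that the arithmetic on them is unit cost.
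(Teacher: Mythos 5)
Your overall skeleton is exactly the paper's: decompose $\ZG$ via the generation result (Proposition~\ref{prp:zgcarac}, which the paper proves in its companion paper rather than taking directly from~\cite{auinger2000semigroups}, but you correctly isolate it as the substantive algebraic input), reduce along Proposition~\ref{prp:closure}, handle the commutative factor by Theorem~\ref{thm:ubcom}, and treat monoids $N^1$ with $N$ nilpotent separately. The divergence, and the problem, is in your algorithm for the nilpotent case. Maintaining the power sums $\sigma^{(\ell)}_v$ is fine, but your query step rests on an unsupported claim: that the integer roots of constantly many constant-degree monic polynomials with $\mathrm{poly}(n)$-bounded coefficients can be extracted in $O(1)$ on the unit-cost RAM. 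Even the degree-two case already requires an integer square root of a value of order $n^2$, which is not among the unit-cost primitives assumed in Section~\ref{sec:prelim}; Newton-type iteration needs $\Theta(\log\log n)$ steps to bring a constant-factor approximation down to additive error below one, and a table-based shortcut is not obviously compatible with the linear-time preprocessing requirement once all relevant degrees $\le k-1$ and coefficient ranges are accounted for. Since you explicitly flag this step as unresolved, the proof as written has a genuine gap precisely at the point where the $O(1)$ bound must be earned.

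The gap is also unnecessary: the encoding by symmetric functions throws away information (the explicit positions) that can be maintained directly. The paper's Proposition~\ref{prop:nil} keeps an \emph{unsorted} doubly-linked list of the positions carrying a non-neutral letter, together with an array mapping each position to its list cell. A substitution inserts or deletes one list element in $O(1)$; a query walks at most $k$ list elements, answers $0$ if the list has length at least $k$ (resp.\ $1$ if it is empty), and otherwise reads off the fewer than $k$ positions, orders them in constant time, and multiplies the corresponding letters in $N$. This gives the same $O(1)$ bound with no arithmetic beyond comparisons, so replacing your Newton-identities decoding by this structure closes the gap while leaving the rest of your argument intact.
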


This generalizes Theorem~\ref{thm:ubcom} (as
commutative monoids are clearly in~$\ZG$)
and covers other monoids,
e.g., the monoid $M = \{1, a, b, ab, 0\}$ with $a^2=b^2=ba=0$, where it intuitively
suffices to track 
the position of~$a$'s and~$b$'s and compare them if there is only one of each.

We now prove Theorem~\ref{thm:ubzg}.
A semigroup $S$ is \emph{nilpotent} if it has a zero and there exists
$k > 0$ such that $S^k = \{0\}$, i.e., all products of $k$ elements 
are equal to~$0$.
Alternatively~\cite[Chapter~X, Section~4]{mpri}, $S$ is nilpotent
iff it satisfies the equation $x^\omega y = y x^\omega = x^\omega$.
We then consider the monoids of the form $S^1$ where $S$ is nilpotent --
an example of this is the monoid $M$ described above.
The variety generated by such monoids is called $\MNil$ and
was studied
by Straubing~\cite{Straubing82}. We can show:
\begin{propositionrep}
	\label{prop:nil}
	For any nilpotent 
        $S$, the dynamic word problem for
        $S^1$ is in~$O(1)$.
\end{propositionrep}

\begin{proofsketch}
  We maintain a (non-sorted) doubly-linked list $L$ of the positions of the
  word~$w$ that contain
  a non-neutral element.
  As $S$
  is nilpotent, the evaluation of~$w$ is~$0$ unless constantly many
  non-neutral letters remain, which we can then find in~$O(1)$ with~$L$.
\end{proofsketch}

\begin{proof}
        Let $k > 0$ be the constant integer such that $S^k = 0$. Given
        a word~$w \in S^*$, we prepare in linear time from~$w$ a doubly-linked
        list $L$ containing the positions of~$w$ having an element which is not the
        identity of~$S^1$, along with a table $T$ of size~$\card{w}$ where the
        $i$-th cell contains a pointer to the list element in~$L$ representing
        the $i$-th element if some exists, and a dummy value otherwise. We can
        construct $L$ in linear time.

        We will maintain the invariant that $L$ contains all positions of~$w$
        containing a non-neutral element (note that we do not assume that $L$
        is in sorted order), and that $T$ contains one pointer per element of~$L$
        leading to the cell corresponding to that element in~$L$.

        We can easily use~$L$ to determine the image of the current word
        in~$S^1$. We first check if $L$ contains $\geq k$ elements, which can be
        done in time $O(k)$, hence $O(1)$, by navigating the list.
        If this is the case, then as $S^k = 0$, we know that the word evaluates
        to~$0$. Otherwise, we know the $< k$ non-neutral elements of~$w$, and we
        can evaluate their product in~$O(1)$ to know the answer.

        We now explain how to maintain $L$ in constant time per update. When an
        update replaces $1$ by~$1$, or a non-neutral element by a
        non-neutral element, then we do nothing. When an update replaces a
        neutral element by a non-neutral element at position~$i$, then we
        add~$i$ to~$L$, and let $T[i]$ be a pointer to the new list item, in time
        $O(1)$. When an update replaces a non-neutral element by a neutral
        element at position~$i$, we use $T[i]$ to find the element for~$i$
        in~$L$, remove it in time $O(1)$, and erase $T[i]$, all in~$O(1)$. This
        concludes the proof.
\end{proof}

In~\cite{localityarxiv} we show that $\ZG$ is generated by such monoids $S^1$ 
and by commutative monoids:
\begin{proposition}[Corollary~3.6 of~\cite{localityarxiv}]
  \label{prp:zgcarac}
	The variety $\ZG$ is generated
  by commutative monoids and monoids of the form $S^1$ for
$S$ a nilpotent semigroup.
\end{proposition}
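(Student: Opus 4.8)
The plan is to prove the set equality $\ZG = \Com \vee \MNil$, writing $\Com \vee \MNil$ for the variety generated by all commutative monoids together with all monoids of the form $S^1$ with $S$ a finite nilpotent semigroup.

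For the easy inclusion $\Com \vee \MNil \subseteq \ZG$, I would just check that both $\Com$ and $\MNil$ satisfy the defining equation $x^{\omega+1}y = yx^{\omega+1}$ and invoke closure of the variety $\ZG$ under the operators generating the join. Commutative monoids obviously satisfy the equation. For $S^1$ with $S$ nilpotent, the only idempotent of a nilpotent semigroup is its zero, so for every $x \in S^1$ with $x \neq 1$ we have $x^\omega = 0$, hence $x^{\omega+1} = 0$ is central, and $1^{\omega+1} = 1$ is central; thus $S^1 \in \ZG$, and then $\MNil \subseteq \ZG$ by closure. (It is the reverse inclusion that, combined with Theorem~\ref{thm:ubcom}, Proposition~\ref{prop:nil} and Proposition~\ref{prp:closure}, will yield Theorem~\ref{thm:ubzg}, so the real content is the reverse inclusion.)

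For the hard inclusion $\ZG \subseteq \Com \vee \MNil$, I would fix a finite $M \in \ZG$ and aim to exhibit a division $M \preceq C \times N$ with $C$ commutative and $N \in \MNil$; equivalently, two congruences $\alpha$ and $\beta$ on $M$ with $\alpha \cap \beta = \Delta_M$, $M/\alpha$ commutative and $M/\beta \in \MNil$ --- then $x \mapsto (\pi_\alpha(x), \pi_\beta(x))$ embeds $M$ into $M/\alpha \times M/\beta \in \Com \vee \MNil$. For $\alpha$ I would take the least congruence making the quotient commutative (generated by all pairs $(xy,yx)$), so $M/\alpha \in \Com$ for free. The defining equation is precisely the statement that the set $E = \{x^{\omega+1} : x \in M\}$ of group-bound elements --- which contains all idempotents and~$1$ --- is central, so $\langle E \rangle$ is a commutative central submonoid, and all the commutative information of $M$ (in the extreme case, all of $M$, when $M$ is already commutative) is visible through~$\alpha$. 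Hence $\beta$ should be designed to discard this central, commutative data but keep exactly enough of the ordered, non-central structure to separate any two distinct $\alpha$-equivalent elements, and the thing to prove is that the resulting quotient $M/\beta$ is built from nilpotent semigroups. The guiding intuition is that, once the central part $\langle E\rangle$ is collapsed, centrality lets one push all group-bound factors of any product aside, normalising it to a bounded-length ordered string of genuinely non-central letters, and longer such strings must collapse --- which is exactly a nilpotency condition, placing $M/\beta$ in $\MNil$. Then $\alpha \cap \beta = \Delta$ would follow by point separation: two distinct elements either differ already in the commutative image (separated by $\alpha$), or have equal commutative image, in which case the normalisation shows they differ in their ordered non-central residue (separated by~$\beta$).

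The hard part will be everything in the reverse inclusion beyond the definition of~$\alpha$: pinning down the congruence $\beta$ precisely, and above all proving $M/\beta \in \MNil$ --- the genuinely new structural statement that a $\ZG$-monoid looks, modulo its commutative image, like a nilpotent semigroup with an adjoined identity --- together with the separation $\alpha \cap \beta = \Delta$. I expect this structural analysis, perhaps carried out via the relatively free pro-$\ZG$ monoids and a normal-form computation rather than by bare hands on an arbitrary~$M$, is exactly where the companion paper~\cite{localityarxiv} does its work, the present statement being a corollary of it.
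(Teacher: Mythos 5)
Your first inclusion is fine, and it is the easy half: commutative monoids and monoids $S^1$ with $S$ nilpotent satisfy $x^{\omega+1}y = yx^{\omega+1}$ (for $x \neq 1$ in $S^1$ one has $x^\omega = 0$), and since $\ZG$ is defined by this equation, the variety generated by these monoids is contained in $\ZG$. The actual content of the proposition is the reverse inclusion $\ZG \subseteq \Com \vee \MNil$, and there your proposal stops at a plan: the congruence $\beta$ is never defined, and the two claims that would make the plan work --- that $M/\beta$ lies in $\MNil$ and that $\alpha \cap \beta = \Delta_M$ --- are precisely the statements left unproved; the picture of ``normalising a product to a bounded-length ordered string of genuinely non-central letters which must then collapse'' is an intuition, not an argument, and it is not justified by the $\ZG$ equation alone (centrality is only given for the group-bound elements $x^{\omega+1}$, so why long products of the remaining elements are forced to collapse is exactly the nilpotency statement you would have to prove). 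As you note yourself, this is where all the work lies; the paper does not prove the proposition in-text either but imports it as Corollary~3.6 of~\cite{localityarxiv}, so a proof here would have to supply precisely the structural analysis you defer.

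A further caution about the plan itself: membership in the join only requires $M$ to \emph{divide} a product $C \times N$ with $C$ commutative and $N \in \MNil$, whereas exhibiting two congruences of $M$ with trivial intersection asks for a subdirect decomposition of $M$ itself, which is in general a strictly stronger property and is not argued anywhere in the sketch; moreover, fixing $\alpha$ to be the least commutative congruence commits you to separating every pair it merges by a single $\MNil$ quotient, which is an additional unverified bet. So, as written, the proposal establishes only the containment that was never in doubt, and the heart of the statement remains open.
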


In view of Theorem~\ref{thm:ubcom} and Proposition~\ref{prop:nil}, the dynamic
word problem is in $O(1)$ for the semigroups that generate the variety $\ZG$
(Proposition~\ref{prp:zgcarac}). Theorem~\ref{thm:ubzg} then follows from
Proposition~\ref{prp:closure}.

\begin{toappendix}
  \subsection{Proof of Lower Bound Results: Theorem~\ref{thm:lbzg}}
\end{toappendix}

\subparagraph*{Lower bound.}
We now show a conditional lower bound on the dynamic word problem for
monoids outside of~$\ZG$. To do this, we will reduce from the prefix-$U_1$
problem:

\begin{theorem}
  \label{thm:lbzg}
	For any monoid~$M$ in $\SG \setminus \ZG$, the
        prefix-$U_1$ problem reduces to the dynamic word problem for~$M$.
\end{theorem}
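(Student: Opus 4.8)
The plan is to extract from the failure of the $\ZG$ equation a concrete "witness pattern'' inside $M$ that lets us encode a $U_1$-word and answer prefix queries. Since $M \notin \ZG$, there exist $x, y \in M$ with $x^{\omega+1} y \neq y x^{\omega+1}$. Write $e = x^\omega$ (an idempotent) and $g = x^{\omega+1} = e x e$, so $g$ lies in the maximal subgroup $G_e$ at $e$, and $g y \neq y g$ in $M$. Because $M \in \SG$, I can hope to normalize this: by the swapping property of $\SG$ (elements of a common subgroup can be swapped across any middle element), the obstruction must already manifest at the level of a single group element failing to commute with $y$, and moreover the relevant products should stay well-behaved (e.g.\ not all collapse to a zero). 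The first step is therefore to massage $x, y$ — replacing $y$ by $e y e$, $e y$, or $y e$ as needed, and replacing $x$ by a suitable power — so as to get idempotents $f, f'$ (possibly equal to $e$) and an element in the sandwiched group $f M f'$ whose two "orderings'' relative to a fixed marker element give distinct monoid values. The key point extracted from $\SG$ will be that conjugation/rearrangement is controlled, so that the difference between a prefix containing an occurrence of the "bad'' letter and one not containing it is detectable by a single monoid comparison.

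With such a pattern in hand, the encoding is the standard one: given a $U_1$-instance, i.e.\ a binary word $b_1 \cdots b_n$ over $\{0,1\}$, I build a word $v_1 \cdots v_N$ over $M$ of length $N = O(n)$ (a constant blow-up, with fixed "scaffolding'' letters interleaved) such that $v_k$ encodes $b_k$: position $k$ carries one monoid element when $b_k = 1$ and a different one (the "$0$-marker'') when $b_k = 0$. A letter-substitution update on the $U_1$-word ($b_k \gets 0$ or $b_k \gets 1$) translates to a constant number of letter-substitution updates on the $M$-word, so it is a constant-time reduction in the sense defined in the preliminaries. For the prefix query "does $b_1 \cdots b_k$ contain a $0$?'', I arrange the scaffolding so that the evaluation of the prefix $v_1 \cdots v_{k'}$ (for the appropriate $k' = O(k)$) equals $g$ (the chosen group element) if no $0$ has occurred and equals a detectably different element once at least one $0$ is present — using that once the $0$-marker appears, the idempotent scaffolding "absorbs'' in a way that the non-commutation $g y \neq y g$ makes visible, and using $\SG$ to guarantee that multiple $0$s do not accidentally restore the value $g$. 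So the dynamic word problem for $M$ answers prefix-$U_1$ queries via a single evaluation query.

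The main obstacle I expect is precisely the normalization step: turning the raw inequality $x^{\omega+1} y \neq y x^{\omega+1}$ into a clean gadget where (i) exactly the presence/absence of a $0$ is read off, and (ii) idempotency of the scaffolding makes the encoding robust to the number and positions of the $0$s and $1$s, while (iii) the scaffolding letters themselves never change under updates. This is where the $\SG$ equation $x^{\omega+1} y x^\omega = x^\omega y x^{\omega+1}$ should do real work — it says a group element can be slid from one side of $y$ to the other without changing the value, so the "product of a run of $1$-letters'' is insensitive to where exactly in the run we are, and the contribution of a $0$-marker can be isolated. I would carry this out by first establishing a small algebraic lemma ("in a monoid of $\SG\setminus\ZG$ there are idempotents $f, f'$, an element $t$, and group elements $g \neq 1$ in $f M f'$ such that the relevant sandwiched products witness the non-commutation robustly''), then defining the encoding and scaffolding explicitly from that lemma, then checking correctness of the three update cases and the query by direct computation with the Rees-type normal forms already used earlier in the paper. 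The construction of the scaffolding and verifying case (ii)–(iii) is the part most likely to require care rather than the reduction bookkeeping, which is routine.
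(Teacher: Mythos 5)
There is a genuine gap, and it sits exactly where the reduction has to do its work: the query mechanism. The dynamic word problem for~$M$ only returns the product of the \emph{entire} maintained word, which cannot depend on the query parameter~$k$; yet you propose to answer the prefix query by reading off "the evaluation of the prefix $v_1\cdots v_{k'}$" and then assert that "a single evaluation query" suffices. Prefix evaluations are precisely what the oracle does \emph{not} provide (prefix-$M$ is the hard problem, not the one you are reducing to), and as described nothing in your construction makes the whole-word value depend on~$k$. The paper bridges this by reserving a free slot at every possible prefix boundary (it encodes position $i$ of the $U_1$-word at position $2i$ of a word of length $2n+2$), and at query time it temporarily writes a witness element $y$ into the slot $2j+1$, evaluates the whole word, and restores the neutral element. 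Because the letters encoding $1$ are the neutral element, the letters encoding $0$ are the idempotent $x^\omega$, and a sentinel $x^\omega$ sits at the end, the whole word evaluates to $x^{k\omega} y\, x^{k'\omega}$ with $k'\geq 1$, i.e.\ to $x^\omega y x^\omega$ or $y x^\omega$ according to whether the prefix contains a $0$. Some such temporary-substitution (or equivalent) device is indispensable, and it is absent from your plan.

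The second issue is the algebraic normalization, which you correctly flag as the main obstacle but leave as a hoped-for lemma in a form that is not the right one. The paper's step is to introduce $\ZE$ (monoids with central idempotents, $x^\omega y = y x^\omega$) and prove $\ZG = \SG \cap \ZE$; hence $M\in\SG\setminus\ZG$ has a non-central \emph{idempotent}, and after a reversal argument one may assume $y x^\omega \neq x^\omega y x^\omega$. Idempotency of the $0$-marker is what makes the encoding robust: $x^{k\omega}=x^\omega$ for every $k\geq 1$, so the answer is insensitive to how many $0$'s occur. Your plan instead keeps the group element $g=x^{\omega+1}$ (or group elements in a sandwiched set $fMf'$, which need not even be a group for $f\neq f'$) as the active ingredient; a run of such markers evaluates to $g^k$, which cycles with $k$, and your appeal to $\SG$ to prevent "multiple $0$s accidentally restoring the value" does not by itself fix this. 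The Rees-normal-form route you sketch is also heavier machinery than needed: once the $\ZG=\SG\cap\ZE$ identity is in hand, $\SG$ plays no further role in the reduction, and the rest is the elementary encoding above. So the overall shape (encode $U_1$ letters by monoid elements, constant blow-up, fixed scaffolding) is right, but both the precise witness you need and the mechanism for turning a prefix query into a whole-word evaluation are missing.
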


\begin{proofsketch}
  We consider the variety $\ZE$~\cite{almeida1987implicit} of monoids whose idempotents are central,
i.e., the variety defined by the equation $x^\omega y = y x^\omega$.
  We show that $\ZG = \SG\cap \ZE$. We then show that, for any monoid not
  in~$\ZE$, we can reduce from the prefix-$U_1$ problem by encoding the elements
  $0$ and $1$ of~$U_1$ using carefully chosen elements of the monoid.
\end{proofsketch}

Using Conjecture~\ref{con:cpq}, and together with Theorem~\ref{thm:lbsg} for the
monoids not in~$\SG$, this implies a conditional super-constant lower
bound 
for monoids outside~$\ZG$.

\begin{toappendix}
We prove Theorem~\ref{thm:lbzg} in the rest of this section. To do this,
let us introduce $\ZE$ as the variety of monoids whose idempotents are central,
i.e., the variety defined by the equation $x^\omega y = y x^\omega$. Note that
$\ZG \subseteq \ZE$, and more precisely:

\begin{claimrep}
  \label{clm:ze}
  We have: $\ZG = \SG\cap \ZE$.
\end{claimrep}

\begin{proof}
  Let $M$ be in $\ZG$. We first show that $M$ is in~$\ZE$. Consider arbitrary
  elements~$x$ and~$y$. We have $x^\omega = (x^\omega)^{\omega+1}$ by
  definition of~$x^\omega$. Thus,
  $x^\omega y = (x^\omega)^{\omega+1} y = y (x^\omega)^{\omega+1}= y x^\omega$.
  Thus, $M$ is in~$\ZE$. Furthermore, as $\ZG \subseteq \SG$, clearly $M$ is
  in~$\SG$.

  Let $M$ be in $\SG\cap \ZE$. Then
\begin{align*}
x^{\omega+1}y & = x^{\omega+1} x^\omega y & (\text{By definition of } x^\omega)\\
              & = x^{\omega+1} y x^\omega & (\text{By }M\in \ZE)\\
              & = x^\omega y x^{\omega+1} & (\text{By }M\in \SG)\\
              & = y x^\omega x^{\omega+1} & (\text{By }M\in \ZE)\\
              & = y x^{\omega+1}&
\end{align*}
This concludes the proof of the claim.
\end{proof}

  We can now conclude the proof of Theorem~\ref{thm:lbzg}:
\begin{proof}
  Let $M$ be a monoid in $\SG \setminus \ZG$. By Claim~\ref{clm:ze}, we know
  that $M$ is not in~$\ZE$, so there exist
$x,y\in M$ such that $x^\omega y \neq y x^\omega$.

  Notice that we cannot have
  both $x^\omega y = x^\omega y x^\omega$ and $y x^\omega = x^\omega y
  x^\omega$, so one of these equations must be false.
  Without loss of generality, we can assume that $y x^\omega \neq x^\omega y
  x^\omega$. Indeed, if we have $x^\omega y \neq x^\omega y x^\omega$ instead,
  then we can show instead the lower bound for the \emph{reversal} $M^t$ defined
  by $x \cdot_{M^t} y = y \cdot x$ with $\cdot_{M^t}$ and $\cdot_{M}$ the
  internal laws of $M^t$ and $M$ respectively. Now, we can obviously reduce from the
  dynamic word problem for~$M^t$ to the same problem for~$M$ by 
  reversing the input word and performing the
  updates at the mirror position. Thus, it suffices to consider the case where
  $y x^\omega \neq x^\omega y x^\omega$

We now show that that any solution to the dynamic membership problem
of $M$ can be used to solve the prefix-$U_1$ problem. To do so, we consider a
  word $w$ on~$\{0, 1\}$ of length~$n$, and encode it as a word $w'$ of length
  $2n+2$. All letters of~$w'$ are
  the neutral element $e$ of $M$, except that $w_{2n+2}' = x^\omega$ and
  $w_{2i}' =
  x^\omega$ whenever $w_i = 0$. This can be done in linear time during the
  preprocessing. Now,
  any update that writes $0$ or $1$ in~$w_i$ is done by writing respectively $x^\omega$ or~$e$
  to $w'_{2i}$.

  Now, to perform at prefix-$U_1$ query with argument $j$, we write $w_{2j+1}
  \colonequals y$, use the dynamic word problem data structure to get the
  evaluation of the word, then write back $w_{2j+1} \colonequals e$.
  The evaluation result, after removing neutral elements,
  is $x^{k\omega} y x^{k'\omega}$ where $k$ is the number of $0$'s in the prefix
  of length~$j$ in~$w$, and $k'$ is the number of $0$'s in the rest of~$w$,
  which is $\geq 1$.
  Because $x^\omega=x^{2\omega}$ this is equivalent to $x^\omega y x^\omega$
  when the prefix contained a $0$, or to $y x^\omega$ when it did not. We have
  shown that these two elements are different, so we can indeed recover the
  answer to the prefix query, concluding the proof.
\end{proof}

\end{toappendix}

\section{Dynamic Word Problem for Semigroups}
\label{sec:semigroup}
We have classified the complexity of the dynamic word problem
for monoids: it is in $O(\log \log n)$ for monoids in~$\SG$, in $O(1)$ for
monoids in~$\ZG$, in $\Omega(\log n / \log \log n)$ 
for monoids not in~$\SG$, and non-constant for monoids
not in~$\ZG$ conditionally to Conjecture~\ref{con:cpq}.
In this section, we extend our results from monoids to \emph{semigroups}.

\subparagraph*{Submonoids and local monoids.}
A \emph{submonoid} of a semigroup $S$ is a subset of the semigroup which is stable
under its composition law and is a monoid.
We first notice via Proposition~\ref{prp:closure} that a semigroup that
contains a hard
submonoid is also hard:
\begin{claimrep}
	\label{clm:reduce-subsg}
       The dynamic word problem for any submonoid of a semigroup~$S$ reduces to the same problem for $S$.
\end{claimrep}

\begin{proof}
  This is immediate by Proposition~\ref{prp:closure} (and also intuitively): we
  simply solve the problem with a structure for~$S$ but where we only use
  elements of the submonoid.
\end{proof}

We will investigate if studying the submonoids of a semigroup suffices to
understand the complexity of its dynamic word problem. To do so, we focus
on a certain kind of submonoids: the \emph{local monoids}. A submonoid $N$ of $S$ is a \emph{local monoid} if there exists an idempotent
element $e$ of~$S$ such that
$N=eSe$, i.e., $N$ is the set of elements that can be written as $ese$ for some
$s \in S$. The point of local monoids is that they are maximal
in the sense that every submonoid $T$ of $S$ is a
submonoid of a local monoid:
indeed, taking $1$ the neutral element of~$T$, all elements of~$T$ can be
written as $1T1 \subseteq 1S1$ and $1S1$ is a local monoid.
For $\V$ a variety of monoids, we denote by $\LV$ the variety of semigroups
such that all local monoids are in $\V$.
As we explained, this is equivalent to imposing that all submonoids are in $\V$
(since varieties are closed under the submonoid operation).

\subparagraph*{Case of~$\SG$.}
We now revisit our results on monoids to extend them to semigroups, starting
with~$\SG$. 
We denote by $\LSG$ the variety of semigroups whose local monoids are in~$\SG$.
We show that a semigroup where all local monoids are
in~$\SG$ must itself be in~$\SG$:

\begin{claimrep}
	 We have $\LSG = \SG$ as varieties of semigroups.
\end{claimrep}
\begin{proof}
  Clearly $\SG \subseteq \LSG$. Let $S$ be a semigroup of~$\LSG$.
	For all elements $x,y$ of $S$, letting
	$e\colonequals x^\omega$, we have that the local monoid $N=eSe$ is in $\SG$.
        Now, since $x'=exe$ and $y'=eye$ are in $N$,
        we have $(x')^{\omega+1} y' (x')^\omega = (x')^\omega y' (x')^{\omega+1}$. Furthermore,
        $(x')^{\omega+1}=x^{\omega+1}$ and $(x')^\omega = x^\omega$.
        Thus $x^{\omega+1} y x^\omega = x^\omega y x^{\omega+1}$.
\end{proof}

Semigroups in~$\SG$ are already covered by our upper
bound (Theorem~\ref{thm:ubsg}), and semigroups not in~$\LSG$ 
have a submonoid not in~$\SG$, so we can use
Claim~\ref{clm:reduce-subsg} and Theorem~\ref{thm:lbsg}. Hence:

\begin{corollary}
  \label{cor:sg}
	Let $S$ be a semigroup. If $S$ is in $\SG$, then the dynamic word
        problem for~$S$ is in $O(\log \log n)$.
        Otherwise, the dynamic word problem for~$S$ is in $\Omega(\log n /
        \log \log n)$.
\end{corollary}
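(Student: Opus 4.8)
The plan is to combine results already in place, treating the two cases of the dichotomy separately.

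For the first case, assume $S \in \SG$. Then the upper bound is immediate from Theorem~\ref{thm:ubsg}, which is stated directly for semigroups in~$\SG$ and gives complexity $O(\log\log n)$; no further argument is needed, since that theorem was already proved at the level of semigroups rather than just monoids.

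For the second case, assume $S \notin \SG$. I would first invoke the preceding claim that $\LSG = \SG$ as varieties of semigroups, so that $S \notin \LSG$; hence some local monoid $N = eSe$ of~$S$, for $e$ an idempotent of~$S$, fails to lie in~$\SG$. Since $N$ is then a monoid outside~$\SG$, Theorem~\ref{thm:lbsg} yields an integer $d \geq 2$ for which prefix-$\ZZ_d$ reduces to the dynamic word problem for~$N$. By Claim~\ref{clm:reduce-subsg}, the dynamic word problem for the submonoid~$N$ reduces to that for~$S$; composing these two reductions, prefix-$\ZZ_d$ reduces to the dynamic word problem for~$S$. Finally, Theorem~\ref{thm:zzd} gives the $\Omega(\log n / \log\log n)$ lower bound for prefix-$\ZZ_d$ in the cell probe model, which, by the composability of reductions and the transfer of lower bounds described in Section~\ref{sec:prelim}, carries over to the dynamic word problem for~$S$.

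I do not anticipate a genuine obstacle here: every ingredient — the semigroup upper bound on~$\SG$, the variety identity $\LSG = \SG$, the monoid lower bound of~\cite{skovbjerg1997dynamic}, and the submonoid reduction — is already available. The only point requiring a moment of care is the bookkeeping of reduction directions, namely that a lower bound on the \emph{source} problem prefix-$\ZZ_d$ propagates to the \emph{target} problem for~$S$, which is exactly what the reduction framework of Section~\ref{sec:prelim} guarantees.
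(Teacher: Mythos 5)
Your argument is correct and matches the paper's proof: the upper bound is Theorem~\ref{thm:ubsg} applied directly to semigroups in~$\SG$, and the lower bound uses $\LSG = \SG$ to extract a local (sub)monoid outside~$\SG$, then composes Theorem~\ref{thm:lbsg} with Claim~\ref{clm:reduce-subsg} so that the prefix-$\ZZ_d$ lower bound of Theorem~\ref{thm:zzd} transfers to the dynamic word problem for~$S$. The reduction directions are handled exactly as in the paper, so nothing is missing.
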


\subparagraph*{Case of~$\ZG$.}
The variety $\ZG$ is not equal to $\LZG$.
For instance, let $S$ be the syntactic semigroup of $a^*b^*$, that is the semigroup
 $\{a, b, ab, 0\}$ defined with $a^2=a$, $b^2=b$ and $ba=0$.
It is not in $\ZG$, since $a$ and $b$ are idempotents that do not commute. However,
its local monoids are either trivial or $U_1$, so they are all in $\ZG$, showing that
this semigroup is in~$\LZG$.
Still, we can extend our characterization from monoids to semigroups up to
studying~$\LZG$:

\begin{theorem}
	\label{thm:lzgeq}
        Let $S$ be a semigroup. If $S$ is in $\LZG$, then the dynamic word
        problem for~$S$ is in $O(1)$.
        Otherwise, unless prefix-$U_1$ is in $O(1)$, the dynamic word problem
        for~$S$ is not in~$O(1)$.
\end{theorem}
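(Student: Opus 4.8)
Our plan is to treat the two directions separately; the hard part is the $O(1)$ upper bound for $\LZG$, for which we will lean on the locality result $\LZG=\ZG*\D$.

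\emph{Lower bound.} Suppose $S\notin\LZG$. Since $\LZG$ is precisely the class of semigroups all of whose submonoids lie in~$\ZG$, we fix a submonoid $M$ of~$S$ with $M\notin\ZG$. If moreover $M\notin\SG$, then Theorem~\ref{thm:lbsg} gives some $d\ge 2$ and a reduction from prefix-$\ZZ_d$ to the dynamic word problem for~$M$, hence by Claim~\ref{clm:reduce-subsg} to that for~$S$; by Theorem~\ref{thm:zzd} the dynamic word problem for~$S$ is then in $\Omega(\log n/\log\log n)$, in particular not in~$O(1)$. If instead $M\in\SG\setminus\ZG$, then Theorem~\ref{thm:lbzg} gives a reduction from prefix-$U_1$ to the dynamic word problem for~$M$, hence to that for~$S$, so an $O(1)$ structure for~$S$ would yield one for prefix-$U_1$. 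In both cases the dynamic word problem for~$S$ fails to be in~$O(1)$ unless prefix-$U_1$ is.

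\emph{Upper bound.} Assume $S\in\LZG$ and invoke $\LZG=\ZG*\D$ from~\cite{localityarxiv}. By Proposition~\ref{prp:closure}, the class of semigroups with an $O(1)$ dynamic word problem is closed under subsemigroup, quotient and direct product, so it suffices to handle a single wreath product $S\colonequals T\wr D$ with $T\in\ZG$ and $D$ definite (the wreath products $T\wr D$ also generate $\ZG*\D$). The dynamic word problem for~$T$ reduces to that for the monoid $T^1\in\ZG$, which is in~$O(1)$ by Theorem~\ref{thm:ubzg}. Fix a constant $k$ such that in~$D$ the product of any $m\ge k$ elements equals the product of its last $k$ factors. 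We then give a constant-time reduction from the dynamic word problem for $S$ to $\card{D^1}$ instances of the dynamic word problem for~$T$. Writing the $i$-th letter of the word as $(f_i,d_i)$ with $f_i\colon D^1\to T$ and $d_i\in D$, the product $w=(f_1,d_1)\cdots(f_n,d_n)$ equals $(F,\,d_1\cdots d_n)$ with $F(x)=\prod_{i=1}^{n} f_i(x\,d_1\cdots d_{i-1})$ for $x\in D^1$. The component $d_1\cdots d_n$ depends (for $n\ge k$) only on the last $k$ letters, so it is maintained in~$O(1)$ by brute force, and the degenerate case $n<k$ is also brute-forced. The key observation is that definiteness gives $d_1\cdots d_{i-1}=d_{i-k}\cdots d_{i-1}$ once $i\ge k+1$, so the factor $g_i^{(x)}\colonequals f_i(x\,d_{i-k}\cdots d_{i-1})$ depends only on the letters in positions $i-k,\dots,i$. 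Hence, for each fixed~$x$,
\[
F(x)=\left(\prod_{i=1}^{\min(k,n)} f_i(x\,d_1\cdots d_{i-1})\right)\cdot\left(\prod_{i=k+1}^{n} g_i^{(x)}\right),
\]
where the first (boundary) product only involves the first $\le k$ letters and is recomputed in~$O(1)$, and the second is the evaluation in~$T$ of the word $g_{k+1}^{(x)}\cdots g_n^{(x)}$ of fixed length $n-k$, which we maintain with one dynamic-word-problem instance for~$T$, one per $x\in D^1$. A substitution at position~$j$ changes the window of $g_i^{(x)}$ only for $i\in\{j,\dots,j+k\}$, so it triggers $O(k\,\card{D^1})=O(1)$ substitutions on these $T$-instances, plus an $O(1)$ refresh of the boundary products and of the $D$-component; a query reassembles $F$ and $d_1\cdots d_n$ from the $\card{D^1}$ answers in~$O(1)$; and preprocessing is $O(n)$ by a left-to-right pass. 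Since the dynamic word problem for~$T$ is in~$O(1)$, so is the one for~$S$.

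\emph{Main obstacle.} The one genuinely delicate step is the decomposition of~$F$ into a bounded boundary product times the window-local factors $g_i^{(x)}$: definiteness of~$D$ is exactly what localizes the argument $d_1\cdots d_{i-1}$ of~$f_i$ to a bounded window, so that a single update perturbs only $O(1)$ of the $g_i$; without it, an update could perturb $\Omega(n)$ of them and the reduction would break. The other nontrivial input, the equality $\LZG=\ZG*\D$, is the separate locality result of~\cite{localityarxiv}.
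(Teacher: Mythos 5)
Your proposal is correct and matches the paper's proof in all essentials: the same lower-bound argument via Claim~\ref{clm:reduce-subsg} together with Theorems~\ref{thm:lbsg} and~\ref{thm:lbzg}, and the same upper-bound route via the locality theorem $\LZG=\ZG*\D$, Proposition~\ref{prp:closure}, Theorem~\ref{thm:ubzg}, and the key observation that definiteness of~$D$ turns each prefix product $d_1\cdots d_{i-1}$ into the length-$k$ window $d_{i-k}\cdots d_{i-1}$, so that a single substitution perturbs only $O(1)$ letters of the derived $T$-word(s). The only (essentially cosmetic) difference is that you specialize to wreath products $T\wr D$ and maintain $\card{D^1}$ instances for~$T$, which requires the standard but unproven-in-your-write-up fact that these wreath products generate $\ZG*\D$ (every semidirect product $T\circ_\act D$ divides $T\wr D$), whereas the paper handles an arbitrary semidirect product $T\circ_\act D$ directly with a single $T$-instance over the word $t_1\cdot\act(d_1,t_2)\cdots\act(d_1\cdots d_{n-1},t_n)$ — of which your construction is in fact a special case, since $T\wr D = T^{D^1}\circ_\act D$ with $T^{D^1}\in\ZG$.
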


The second part of the claim is by
Claim~\ref{clm:reduce-subsg} and Theorem~\ref{thm:lbsg}, but
the first part is much trickier. We use a characterization of~$\LZG$ 
as a \emph{semidirect
product} $\ZG*\D$, which follows from a very technical
\emph{locality result} on~$\ZG$~\cite{localityarxiv}, and then design an
algorithm for the dynamic word problem for semigroups in $\ZG*\D$.
We prove Theorem~\ref{thm:lzgeq} in the rest of this
section.

Given two semigroups $S$ and $T$, a \emph{semigroup action} of $S$ on $T$ is
defined by a map $\act\colon S\times T \to T$ such that
$\act(s_1, \act(s_2, t))= \act(s_1s_2, t)$ and $\act(s, t_1t_2)=\act(s,t_1) \act(s,t_2)$.
We then define the  product $\circ_\act$ on the set
$T\times S$ as follows: for all $s_1, s_2$ in $S$
and $t_1, t_2$ in $T$, we have: $(t_1, s_1)\circ_\act(t_2, s_2) \colonequals (t_1 \act(s_1, t_2), s_1 s_2).$
The set $T\times S$ equipped with the product $\circ_\act$ is a semigroup called the
\emph{semidirect product} of $S$ by $T$, denoted $T\circ_\act S$.

We say that a semigroup $D$ is \emph{definite} if there exists an integer $k \in \NN$ such that
for all $y, x_1, \ldots, x_k$ in $D$, we have $yx_1\cdots x_k = x_1\cdots x_k$.
Alternatively,
a semigroup is definite iff it satisfies the equation $y x^\omega = x^\omega$~\cite[Proposition 2.2]{straubing1985finite}
for all $x,y$ in $D$.
In particular, every nilpotent semigroup is definite.
We write $\D$ for the variety of definite semigroups.

Our alternative definition of $\LZG$ will be the variety of semigroups $\ZG*\D$ 
generated by semigroups that are the semidirect product of a
$\ZG$  monoid by a definite semigroup.

The variety $\ZG*\D$ of semigroups is not immediately related to the
variety~$\LZG$ defined above.
One can easily show that $\ZG*\D \subseteq \LZG$, but the other direction is
much more challenging to establish.
We show this as a so-called \emph{locality theorem}, which we defer to a
separate paper~\cite{localityarxiv} because it uses different tools and is of possible independent interest:

\begin{theorem}[\cite{localityarxiv}, Theorem~1.1]
	We have: $\ZG*\D = \LZG$.
\end{theorem}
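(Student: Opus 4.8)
The plan is to prove the two inclusions separately, the inclusion $\ZG*\D\subseteq\LZG$ being the routine one and $\LZG\subseteq\ZG*\D$ carrying all the weight. For the easy direction I would take a semidirect product $N\circ_\act D$ generating $\ZG*\D$, with $N\in\ZG$ a monoid and $D$ definite, and compute the local monoid $eSe$ at an arbitrary idempotent $e=(t,d)$ of $S\colonequals N\circ_\act D$. Definiteness of $D$ forces $yd=d$ for every $y$, so the $D$-coordinate of every element of $eSe$ equals $d$; hence $eSe$ is, up to the endomorphism $\act(d,-)$ of $N$, a divisor of $N$, and therefore still satisfies $x^{\omega+1}y=yx^{\omega+1}$, i.e.\ lies in $\ZG$. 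Since $\LZG$ is a variety, it is closed under the quotient, subsemigroup and direct-product operations used to generate $\ZG*\D$ from such products, which finishes this inclusion.

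For $\LZG\subseteq\ZG*\D$ the natural tool is the Delay Theorem of Tilson: for any variety of monoids $\V$ one always has $\V*\D\subseteq\LV$, with $\LV=\V*\D$ precisely when $\V$ is \emph{local}, meaning that every finite category whose local monoids $C(c,c)$ all lie in $\V$ divides a monoid of $\V$. So it suffices to prove that $\ZG$ is local. Given such a category $C$, I would build a dividing monoid $M\in\ZG$ guided by the decomposition $\ZG=\Com\vee\MNil$ (Proposition~\ref{prp:zgcarac}): the nilpotency bound $k$ furnished by the $\MNil$ factor ensures that any composable path in $C$ using $k$ or more ``non-unit'' arrows has an image determined in a bounded way, so that only $k$-bounded pieces of paths are relevant; on those pieces the hypothesis $C(c,c)\in\ZG$ lets one commute group-valued factors past everything else --- a categorical analogue of the swapping of Claim~\ref{clm:oneswap} --- so the group content can be recorded coordinate-free in a single shared commutative register. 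Consolidating this register together with the $\MNil$-type data into one monoid, and verifying that it still satisfies the defining equation of $\ZG$, yields $C\prec M$.

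The locality of $\ZG$ is where all the difficulty sits. The warning sign is that $\ZG$ contains every abelian group, and group varieties are the classical obstruction to locality (the variety of all groups is not local); so the argument must genuinely use nilpotency, which is exactly what forbids arbitrarily long ``group-carrying'' detours between distinct objects of $C$ and thereby makes the group data localizable onto a central coordinate. The technical core is then a combinatorial lemma of the shape ``two coterminal paths of $C$ that differ only by a $k$-bounded rearrangement of group factors and a reordering of idempotents have the same image'', proved by induction on path length from the loop-monoid equations, together with the delicate verification that the consolidated monoid lands inside $\ZG$ and not merely inside the larger variety $\SG$. A more abstract alternative would avoid the explicit $M$ and argue equationally instead: by the Delay Theorem, $\LZG=\ZG*\D$ is equivalent to the statement that the ``localized'' form of the equation $x^{\omega+1}y=yx^{\omega+1}$ (obtained by substituting $exe$, with $e$ idempotent, for the variables) already forces membership in $\ZG*\D$, which reduces to a profinite computation inside $\ZG$ --- and the obstacle simply migrates into carrying out that computation.
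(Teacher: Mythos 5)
Your reduction is framed the right way, but the heart of the theorem is not actually proved. The easy inclusion $\ZG*\D\subseteq\LZG$ is essentially fine: for an idempotent $e=(t,d)$ of $N\circ_\act D$ the $D$-coordinate of $eSe$ is pinned to $d$ by definiteness, and the map $(a,d)\mapsto\act(d,a)$ is a morphism from $eSe$ into $N$ which is moreover \emph{injective} (because $(a,d)=e(a,d)e$ forces $a=t\,\act(d,a)\,\act(d,t)$), so $eSe$ embeds in $N$ and inherits the $\ZG$ equation; you should make this injectivity explicit, since a mere morphism into $N$ would not give division, but the step is routine. The problem is the converse inclusion. Via Tilson's Delay Theorem you correctly reduce $\LZG\subseteq\ZG*\D$ to the statement that $\ZG$ is \emph{local}, but that statement \emph{is} the theorem: it is exactly the ``very technical locality result'' that the paper deliberately defers to the companion paper~\cite{localityarxiv}, and your treatment of it is a strategy outline, not an argument. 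The pivotal claims are unsubstantiated: the decomposition $\ZG=\Com\vee\MNil$ of each \emph{local} monoid of the category $C$ does not furnish any single ``nilpotency bound $k$'' governing paths of $C$, and the assertion that any composable path with at least $k$ non-unit arrows ``has an image determined in a bounded way'' is not justified and is not what happens when the group content is nontrivial --- the group-valued part does not saturate, and showing that it can nevertheless be collected into one central, globally shared coordinate across \emph{different} objects of $C$ (and that the consolidated monoid stays in $\ZG$ rather than merely $\SG$) is precisely the delicate induction you leave as a named-but-unproved ``combinatorial lemma''. Your proposed equational/profinite alternative, as you concede, just relocates the same obstacle.

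A secondary but telling issue: your motivating remark that ``the variety of all groups is not local'' is false --- the variety $\mathbf{G}$ of all finite groups is a classical example of a \emph{local} variety (one has $\mathbf{L}\mathbf{G}=\mathbf{G}*\D$); the standard non-local examples are varieties such as the semilattices $\mathbf{Sl}$. This does not by itself invalidate your plan, but it shows the heuristic you use to locate ``where all the difficulty sits'' is off, and it matters here because the genuine difficulty in~\cite{localityarxiv} is exactly the interaction between the central group part and the nilpotent part across objects, not an obstruction coming from groups per se. As it stands, the proposal establishes only the easy inclusion and a correct reduction of the hard one to an unproven locality claim, so it does not constitute a proof of the statement.
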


To conclude the proof of Theorem~\ref{thm:lzgeq}, by the locality theorem above,
it suffices to solve the dynamic word problem for semigroups in~$\ZG*\D$. By
Proposition~\ref{prp:closure}, it suffices to consider the semigroups that
generate the variety. We do this below, establishing
Theorem~\ref{thm:lzgeq}:

\begin{propositionrep}
  Let $S$ be a definite semigroup, let $T$ be a semigroup of~$\ZG$, and let
  $\act$ be a semigroup action of~$S$ on~$T$. The
  dynamic word problem for the semigroup $T\circ_\act S$ reduces to the same problem
  for~$T$.
\end{propositionrep}

\begin{proofsketch}
  We express the direct product of the letters of the input word as a product involving elements of~$T$ and prefix
  sums of elements of~$S$, which we can maintain in~$O(1)$.
\end{proofsketch}

\begin{proof}
  Given a word $w = (t_1, s_1),\ldots, (t_n,s_n)$ of $(T\circ_\act S)^*$, we note
  that the second component of its evaluation is $s_1 \cdots s_n$. As $S$ is
  definite, we can compute this and maintain it in constant time, simply by looking
  at the $k$ last elements, for $k$ the integer witnessing that $S$ is definite.

  As for the first component, it can be shown to evaluate to the following word
  of~$T^*$:
    \[t_1 \cdot\act(s_1, t_2)\cdot \act(s_1s_2, t_3) \cdots \act(s_1s_2\cdots
    s_{n-1}, t_n)\]
  We initialize a structure for the dynamic word problem on~$T$ with this
  word~$w'$
  to obtain the first component. Now, when the word~$w$ is updated at
  position~$i$, we perform the updates on this word~$w'$ by changing $t_i$ in
  cell~$i$ (a single update), and by changing the $k$ cumulative sums of~$s$ that
  have changed, i.e., the products in the first component of~$\act$ in the up to
  $k$ cells starting at cell~$i$: this amounts to~$k$ updates, and for each of
  them the new cumulative sum can be computed by looking at the $k$ last
  elements, so this is a constant time computation and a constant number of
  operations.
\end{proof}

\section{Dynamic Word Problem for Languages}
\label{sec:languages}
\begin{toappendix}
  \label{apx:languages}
\end{toappendix}
We now turn to the dynamic membership problem for regular
languages, and show Theorem~\ref{thm:mainres} using the three previous sections
and some extra algebraic results.

\subparagraph*{Connection to the dynamic word problem.}
A regular language $L$ is \emph{recognized} by a finite monoid if there exists a morphism
$\eta\colon\Sigma^*\to M$ such that $L=\eta^{-1}(\eta(L))$.
The \emph{syntactic congruence} of~$L$ is the equivalence relation on~$\Sigma^*$
where $u, v \in \Sigma^*$ are equivalent iff, for each $r, t \in \Sigma^*$,
either $rut \in L$ and $rvt \in L$, or $rut \notin L$ and $rvt \notin L$.
The \emph{syntactic monoid}~$M$ of~$L$ is the quotient of $\Sigma^*$ by the
syntactic congruence of~$L$, and the \emph{syntactic morphism}
is the morphism mapping~$\Sigma^*$ to~$M$; the syntactic morphism witnesses that
the syntactic monoid recognizes~$L$.

The dynamic membership problem for a language clearly reduces to the dynamic word
problem for its syntactic monoid. However, the converse is not true: the
language $L\colonequals (aa)^*ba^*$ has a syntactic monoid $M$ that can be shown
to be outside of~$\SG$, but we can solve dynamic membership for~$L$ in~$O(1)$ 
by counting the $b$'s at even and odd positions. Intuitively, $M$
has a neutral element~$1$ so that the dynamic word problem for~$M$ has a
reduction from prefix-$\ZZ_2$, but $1$ is not achieved by a letter of the
alphabet so dynamic membership for~$L$ is easier.

We extend our results to languages using the notion of \emph{stable
semigroup}~\cite{Barrington92,Chaubard06}. This allows us to remove the neutral element (as it is a
semigroup not a monoid) and ensures that all semigroup elements 
can be achieved by subwords of some constant length (the \emph{stability
index}).

Formally, let $L$ be a regular language and $\eta\colon\Sigma^* \to M$ its syntactic morphism.
The \emph{powerset} of~$M$ is the monoid
whose elements are subsets of $M$ and for $E,F\subseteq M$, the product $EF$ is
$\{xy\mid x\in E, y\in F\}$.
The \emph{stability index} of~$L$ is the idempotent power~$s$ of $\eta(\Sigma)$ in the
powerset monoid.
Intuitively, this choice of $s$ ensures that, for any two words
$w_1, w_2 \in \Sigma^s$, the value $\eta(w_1 w_2)$ of their concatenation in
the monoid can be achieved by another word of~$\Sigma^s$, i.e., $\eta(w_1 w_2) =
\eta(w)$ for some $w \in \Sigma^s$.
Then $\eta(\Sigma^s)$ is a subsemigroup of $M$, because 
$(\eta(\Sigma^s))^2 = \eta(\Sigma^s)$: we call it 
the \emph{stable semigroup} of~$L$.
For any class of semigroups $\V$, we denote by $\QV$ the class of languages
whose stable semigroup is in $\V$.

\subparagraph*{Upper bounds.}
We first show that the dynamic membership problem for a regular language
reduces more specifically to the dynamic word problem for its \emph{stable
semigroup}:
\begin{propositionrep}
	\label{prop:stable}
	Let $L$ be a regular language. The dynamic membership problem
     for~$L$ reduces to the dynamic word
	problem for the stable semigroup of~$L$.
\end{propositionrep}

\begin{proofsketch}
  We partition the word of~$L$ into chunks of size~$s$ (plus one of size $\leq s$)
  for $s$ the stability index, and feed them to the data structure for the stable semigroup of~$L$.
\end{proofsketch}

\begin{proof}
  Let $L$ be a regular language with $S$ its stable semigroup.  We reduce the dynamic
membership problem for~$L$ to the dynamic word problem for~$S$ as
follows.  For any word $u\in \Sigma^*$, we decompose $u$ into $u=u_1
u_2 \cdots u_n v$ with each $u_i$ of length $s$ and $v$ having length
$\leq s$, where $s$ denotes the stability index. The image by $\eta$ of
this decomposition gives a word of $S^* M$. We use a maintenance
scheme for the word of~$S^*$, and propagate the updates on~$u$ to
updates of this word in $O(1)$ in the expected way: we compute the
corresponding letter of the word of~$S^*$ by dividing the position of
the update by a constant, we look a constant number of neighboring
positions to find the entire $u_i$ in the decomposition of~$u$, and
evaluate~$\eta$ to compute the new image.  Note that the case of
updates to~$v$ is immediate as~$v$ has constant size.  We can then use
the structure for the dynamic word problem on~$S^*$ to obtain the
image of~$u_1 \cdots u_n$ in the stable semigroup, which we can
compute together with~$v$.
\end{proof}

By Corollary~\ref{cor:sg} and Theorem~\ref{thm:lzgeq}, this implies that
languages in~$\QSG$ (resp., in $\QLZG$) have a dynamic membership problem in
$O(\log \log n)$ (resp., in $O(1)$).

\subparagraph*{Lower bounds.}
We now show that languages whose stable semigroup is not in $\LV$ admit a
reduction from the dynamic word problem of a monoid of~$\V$.

\begin{toappendix}
  We now prove Proposition~\ref{prp:lblang} in the rest of the appendix.
\end{toappendix}

\begin{propositionrep}
  \label{prp:lblang}
	Let $\V$ be a variety of monoids and let $L$ be a regular language not
        in~$\QLV$. There 
	is a monoid not in $\V$ whose dynamic word problem reduces 
        to the dynamic membership problem for~$L$.
\end{propositionrep}

\begin{proofsketch}
  If $L$ is not in $\QLV$, then its stable semigroup contains a submonoid $M$ not
  in~$\V$, and all elements can be achieved by a block of~$s$ letters for $s$
  the stability index.
\end{proofsketch}

\begin{toappendix}
  We first
  establish a general claim formalizing the connection between the syntactic
  monoid and the dynamic word problem:

\begin{proposition}
  \label{prp:connection}
	Let $L$ be a regular language and $\eta$ its syntactic morphism.
        The dynamic membership problem for~$L$ is equivalent under constant-time
        reductions to the dynamic word problem for its syntactic monoid $M$ where we
        require that we only use elements of~$\eta(\Sigma)$.
\end{proposition}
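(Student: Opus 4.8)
The plan is to establish the equivalence by exhibiting a constant-time reduction in each direction. The easy direction reduces dynamic membership for~$L$ to the restricted dynamic word problem for~$M$: given the input word $w = w_1 \cdots w_n \in \Sigma^*$, I initialize a structure for the dynamic word problem for~$M$ on the word $\eta(w_1), \ldots, \eta(w_n)$ over the alphabet~$\eta(\Sigma)$, which is linear-time preprocessing since $\eta$ restricted to~$\Sigma$ is a fixed finite table. A letter substitution $(i,a)$ on~$w$ is propagated as the single substitution writing $\eta(a) \in \eta(\Sigma)$ at position~$i$, and a membership query is answered by querying the $M$-structure for $\eta(w) = \eta(w_1)\cdots\eta(w_n)$ and testing whether it lies in the fixed finite set~$\eta(L)$, which is correct since $L = \eta^{-1}(\eta(L))$.

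The converse direction is the delicate one: a single membership query to~$L$ returns only one bit, whereas the dynamic word problem for~$M$ must return a full element of~$M$. I would overcome this by maintaining constantly many membership structures, one per separating context. Concretely, for each $m \in \eta(\Sigma)$ fix a letter $a_m \in \Sigma$ with $\eta(a_m) = m$, and for each pair of distinct elements $m \neq m'$ of~$M$ fix, using that $M$ is the \emph{syntactic} monoid of~$L$, words $r, t \in \Sigma^*$ with exactly one of $\eta(r)m\eta(t)$, $\eta(r)m'\eta(t)$ in~$\eta(L)$; let $(r_1,t_1),\ldots,(r_k,t_k)$ be the resulting finite list. Given the current word $v = v_1 \cdots v_n$ over~$\eta(\Sigma)$, for each $i \leq k$ I maintain a membership structure for~$L$ on the word $r_i\, a_{v_1} \cdots a_{v_n}\, t_i$, where the cells holding $r_i$ and~$t_i$ never change; this is linear preprocessing since $k$ is constant. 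A substitution writing $m' \in \eta(\Sigma)$ at position~$j$ of~$v$ is forwarded to each of the $k$ structures as the single substitution writing $a_{m'}$ at the corresponding cell. To answer the product query, I read the $k$ membership bits: since $r_i\, a_{v_1}\cdots a_{v_n}\, t_i \in L$ iff $\eta(r_i)\,\eta(v)\,\eta(t_i) \in \eta(L)$ (again by $L = \eta^{-1}(\eta(L))$ and $\eta$ being a morphism), and since every two distinct elements of~$M$ are separated by one of the chosen contexts, the bit-tuple determines $\eta(v)$ uniquely, which I recover by a precomputed table from tuples to elements of~$M$. The case $n=0$ is handled uniformly, each structure then running on $r_i t_i$ whose image is $\eta(r_i)\cdot 1\cdot\eta(t_i)$.

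The main obstacle is precisely this second direction, and the key point making it work is that $M$ is the syntactic monoid: syntactically inequivalent elements are exactly those separated by some word-context, so finitely many contexts suffice. For a non-syntactic recognizing monoid the trick would fail, which matches the phenomenon noted in the main text (e.g.\ for $(aa)^*ba^*$) that dynamic membership for~$L$ can be strictly easier than the dynamic word problem for an arbitrary monoid recognizing~$L$.
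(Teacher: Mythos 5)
Your proposal is correct and takes essentially the same route as the paper: the easy direction is identical, and your separating-context construction is a concrete unfolding of the paper's appeal to the fact that each $L_m = \eta^{-1}(m)$ lies in the Boolean algebra generated by quotients of~$L$ (padding with the fixed words $r_i$, $t_i$ implements the quotients, and the bit-tuple lookup implements the Boolean combination). Nothing essential is missing.
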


\begin{proof}
	Clearly, a maintenance scheme for the dynamic word problem of~$M$ provides a maintenance scheme for $L$
	as it is sufficient to check that the resulting element belongs to
        $\eta(L)$, and we will indeed only use elements of~$\eta(\Sigma)$.

	In the other direction, we use the well-known fact on the syntactic
        morphism that,
        for all $m \in M$, the language $L_m \colonequals \eta^{-1}(m)$
	is in the Boolean algebra closed under the quotient operator generated
        by $L$.
        In other words, any language $L_m$ can be expressed using $L$, Boolean
        operations, and the quotient operator; none of these operations changes
        the alphabet.
	Thus, we can reduce the dynamic word problem for~$L$ to the problem
        of checking if the image is in~$L_m$ for every possible choice of $m \in
        M$. Now, each of the $L_m$ reduces to~$L$, without changing the
        alphabet, by the analogue of
        Proposition~\ref{prp:closure} on languages, which can apply to the
        quotient operator and Boolean operators. So we can
        solve the dynamic word problem for~$M$ under the
        assumption that we gave, by building data structures for these~$L_m$:
        this uses the assumption that we are only using letters
        from~$\eta(\Sigma)$.
\end{proof}

  We can now prove Proposition~\ref{prp:lblang}:

  \begin{proof}[Proof of Proposition~\ref{prp:lblang}]
Assume $L$ is not in $\QLV$. Let $\eta$ be its syntactic morphism, $s$ be its stability index
and $S$ its stable semigroup.
Then, by definition of $L$, there exists a submonoid $N$ of $S$ which is not in $\V$.
By definition of the stable semigroup, there exists a mapping $\psi$ from $N$ to $A^s$
such that for all $n\in N$, we have $\eta(\psi(n)) = n$, for $s$ the stability index.
We simulate the dynamic word problem for $N$ by inserting for each update at position $i$
the corresponding word at position $s\times i$. We can perform the evaluation in $N$ by evaluating
the image by $\eta$ of the resulting word. We know that evaluating the image by
    $\eta$ reduces
  in constant time to $L$ by Proposition~\ref{prp:connection}, where we use the
  fact that the resulting word (formed of the blocks of size~$s$) only consists
  of letters from the original alphabet.
\end{proof}

\end{toappendix}

Again by Corollary~\ref{cor:sg} and Theorem~\ref{thm:lzgeq}, we deduce that
languages outside of~$\QSG$ have complexity at least $\Omega(\log n / \log \log
n)$. Further,
assuming Conjecture~\ref{con:cpq},
and languages outside of $\QLZG$ do not have complexity $O(1)$.

\section{Extensions, Problem Variants, and Future Work}
\label{sec:extensions}
We have presented our results on the dynamic word problem for monoids and
semigroups, and on the dynamic membership problem for regular languages. We
conclude the paper by a discussion of problems for further study. We first
discuss the question of \emph{intermediate complexities} between $O(1)$ and
$O(\log \log n)$. We then study the complexity of \emph{deciding which case
applies} as a function of the target language, semigroup, or monoid. We last
explore the issue of \emph{handling insertions and deletions} on the input word,
and of supporting \emph{infix queries}.

\subparagraph*{Intermediate complexities.}
Our $O(\log \log n)$ upper bound in Theorem~\ref{thm:ubsg} and its
variants may not be tight.
Still, we can identify a language $L_{U_2}$ in~$\QSG\setminus\QLZG$ for
which we show that the dynamic membership problem is in $\Theta(\log \log n)$ (even allowing
randomization and allowing a probably correct answer), because
the prefix-$U_2$ problem reduces to it.

We can also identify a language of~$\QSG\setminus\QLZG$ that reduces to
prefix-$U_1$ and so can be solved in expected $O(\sqrt{\log \log n})$.
This shows that languages in $\QSG\setminus\QLZG$ have different complexity
regimes, at least when allowing randomization.

\begin{propositionrep}
There is a language $L_{U_2}$ in~$\QSG\setminus\QLZG$ which is equivalent to
  prefix-$U_2$ under constant-time reductions,
and a language $L_{U_1}$ in $\QSG\setminus\QLZG$ which is equivalent to prefix-$U_1$
  under constant-time reductions.
\end{propositionrep}

\begin{proof}
  We show each claim of the statement separately.

\subparagraph*{Claim on $L_{U_2}$.}
Let $L_{U_2}$ be the regular language over the alphabet $\Sigma=\{a,b,c,x\}$
that contains all words where there is only one $x$ and the closest preceding non-$c$
letter exists and is a~$b$, i.e., $L_{U_2}$ can be defined through the regular expression
  $(a+b+c)^*bc^*x(a+b+c)^*$. This language is in $\SG$ and hence can be
maintained in $O(\log \log n)$.

  We first design a reduction from
  the prefix-$U_2$ problem to the dynamic membership problem for~$L_{U_2}$, which implies
  that there is an $\Omega(\log \log n)$ lower bound on that problem.

Let $w$ be the word over the alphabet $1,a,b$ to maintain for
prefix-$U_2$. The idea of the reduction consists in encoding
$w=w_1 \cdots w_n$ into $h(w)=h(w_1) \cdots h(w_n)$ where
  $h(1)=c$, $h(a)=a$ and $h(b)=b$. Clearly $h(w)\not\in L_{U_2}$ as
there is no~$x$ in the word.

Now, to handle a prefix-$U_2$ query with parameter $k$, we look at the
$k$-th letter of $w$. If it is $a$ or $b$ we can answer immediately
with the value of $w_k$. In the remaining case of $w_k=1$ we set
$w'_k$ to~$x$.
  If we now have $w'\in L_{U_2}$ it means that the answer of
the prefix query is $b$, otherwise it means that the answer is either
$a$ or $1$. To distinguish the two cases we first look at
$w_1$, if $w_1\neq 1$ then the answer is $a$ otherwise we set
  $w'_1$ to $b$, if $w'$ belongs to $L_{U_2}$ then the answer is $1$
otherwise it is $a$. In all these cases after answering the query we
restore $w'$ to its previous state.

  We then design
  a reduction from the dynamic membership problem for~$L_{U_2}$ \emph{to} the prefix-$U_2$ problem. We simply encode
  a word of $\{a, b, c, x\}$ by writing $a$ as~$a$, $b$ as~$b$, $c$
  as~$1$, and $x$ as~$1$. We also maintain a doubly-linked
  list as in the proof of Proposition~\ref{prop:nil} to store all occurrences
  of~$x$. Now, whenever the number of $x$'s is different from $1$ then the word
  does not belong to the language. Otherwise, we use $L$ to find the position of
  the one~$x$. The word is then in the language iff the closest preceding
  non-$c$ letter exists and is a~$b$, which we can determine with the
  prefix-$U_2$ data structure by testing for the corresponding prefix and
  checking if the answer is~$b$.

\subparagraph*{Claim on $L_{U_1}$.}
Let $L_{U_1}$ be the regular language over the alphabet
$\Sigma=\{a,c,x\}$ defined by the expression $c^*x(a+c)^*$, i.e., there is
  exactly one~$x$ and it precedes all~$a$'s.

  We first design a reduction from the dynamic membership problem
  for~$L_{U_1}$ to the prefix-$U_1$ problem.
  Indeed, we translate an input word $w$
  on $\{a, c, x\}$ to a word on~$\{0, 1\}$ by writing $0$ for $a$ and $1$ for
  $a$ and $x$, and maintain this under updates. We also maintain a doubly-linked
  list as in the proof of Proposition~\ref{prop:nil} to store all occurrences
  of~$x$. Now, whenever the number of $x$'s is different from $1$ then the word
  does not belong to the language. When the number becomes equal to~$1$, the
  list $L$ allows us to know its position, and a prefix query on the
  prefix-$U_1$ structure allows us to know if there is an $a$ before this~$x$ (in
  which case the word is not in the language) or if there is none (in which case
  it is).

  We then design a reduction \emph{from} the prefix-$U_1$ problem to
  the dynamic membership problem for~$L_{U_1}$. Indeed, we
  simply encode $1$ as~$c$ and $0$ as~$a$. When a prefix query arrives for a
  prefix of length~$i$, we check if the $i$-th letter is a~$0$, in which case we
  return~$0$, and otherwise we update the word on~$\{a, b, x\}$ to insert an~$x$
  at that position. Now, the resulting word belongs to the language iff there is
  no preceding~$0$.
\end{proof}

\subparagraph*{Deciding which case applies.}
A natural question about our results is the question of
efficiently identifying, given a regular language, which of the cases of
Theorem~\ref{thm:mainres} applies, or in particular of determining, given an
input monoid or semigroup, if
it is in~$\SG$, or in~$\ZG$.
This depends on how the input is represented.
If we are given a monoid explicitly (as a table of its operations), then the
equations of $\ZG$ and of $\SG$ can be checked in polynomial time.
If the monoid is represented more concisely as the transition monoid of some
automaton, then the verification can be performed in PSPACE. We do not know if
the problems are PSPACE-hard, though this seems likely at least for~$\SG$
because of its proximity to aperiodic monoids~\cite{Cho91}.
We leave open the precise complexities of this task, in particular for the
$\mathbf{L}$ and $\mathbf{Q}$ operators.

\subparagraph*{Handling insertions and deletions.}
Another natural question is to handle insertion and
deletion updates, i.e., \emph{inserting} letter $a$ at position $k$ 
transforms the word $w_1 \cdots w_{k-1} w_k \cdots w_n $ into 
$w_1 \cdots w_{k-1} a w_k \cdots w_n $, and \emph{deleting} at position
$k$ does the opposite. Any regular language
can be maintained under such updates in $O(\log n)$ using a Fenwick
tree, but it makes the problem much harder for most
languages. For example, if the alphabet has two letters $a$ and~$b$, just testing if the word that we maintain contains an $a$
requires $\Omega(\log n / \log \log n)$ by~\cite{cstheorylist}.
This is why we do not study such updates in this work. Interestingly, notice
that our algorithm in Theorem~\ref{thm:ubsg2} supports insertions and deletions
on words represented as
vEBs, but the semantics are different (they use explicit positions in a fixed range).

\subparagraph*{Infix queries.}
A natural extension of dynamic membership for a regular language~$L$ is
the \emph{dynamic infix membership problem}, where we can query any \emph{infix} of the
word (identified by its endpoints) to ask
whether it is in~$L$. The $O(\log n / \log \log n)$ algorithm of
Theorem~\ref{thm:ub_ln_lln} supports this,
and so can the $O(\log \log n)$ algorithm
of~\cite{skovbjerg1997dynamic} for group-free monoids.
However, the infix problem can be harder. Consider for instance the language $L_2$ on $\{a, b\}$ of words with an even number
of~$a$'s. Dynamic membership has complexity
$O(1)$ because $L_2$ is commutative, but infix queries (or even prefix queries) require $\Omega(\log n / \log
\log n)$ as prefix-$\ZZ_2$ reduces to it.

We leave open the study of the complexity of the infix problem.
We note, however, that this
problem for a language $L$ can be studied as the dynamic membership problem for
a
regular language defined from~$L$. So our results cover
the infix problem via this transformation; we leave to future work a
characterization of the resulting classes.

\begin{claimrep}
  For any fixed regular language $L$, the dynamic infix membership problem
  is equivalent up to constant-time reductions to the dynamic membership problem
  for the language $\Sigma^* x L x \Sigma^*$ where~$x$ is a fresh letter.
\end{claimrep}

\begin{proof}
  Let $L' \colonequals \Sigma^* x L x \Sigma^*$, where $x$ is a fresh letter not
  in~$\Sigma$.
  We first explain how to use a dynamic membership data structure for~$L'$ to
  solve the dynamic infix membership problem for~$L$. Given a word
  over~$\Sigma^*$, we initialize the structure for~$L'$ on this word where we
  add one letter to the beginning of the word and one letter to the end of the
  word, say~$a$. When updates are performed on the word for~$L$, we
  perform them in the data structure by offsetting them by 1. Now, whenever we
  receive an infix query for a subword, we perform two updates
  on~$L'$ to replace the characters immediately before and after the subword
  by~$x$, check if the resulting word belongs to~$L'$ using the data structure,
  and undo these two operations to put back the correct characters. It is clear
  that the modified word is in~$L'$ iff the infix is in~$L$. Note that the
  addition of the two fixed letters at the beginning and end of the word
  guarantee that the characters immediately before and after the subword are
  indeed defined.

  Second, we explain how to use a dynamic infix membership data structure
  for~$L$ to solve the dynamic membership problem for~$L'$. Given a word over
  $(\Sigma \cup \{x\})^*$, we initialize the structure for~$L$ by replacing
  every occurrence of $x$ by some arbitrary character of~$\Sigma$. We also
  prepare a doubly-linked list, like in the proof of Proposition~\ref{prop:nil},
  storing all occurrences of the letter~$x$, as well as a pointer from positions
  containing~$x$ to the element of the doubly-linked list storing this copy
  of~$x$. All of this can be performed as part of the preprocessing.

  Given updates to the word on~$L'$, we maintain the doubly-linked list and
  pointers, and we replicate these updates on the word on~$L$, except that
  occurrences of~$x$ are replaced by some arbitrary character of~$\Sigma$.

  Now, to know if the current word belongs to~$L'$ or not, first observe that
  this is never the case if the number of occurrences of~$x$ is different
  from~$2$, which we can check in constant time using the doubly-linked list. If
  there are exactly two~$x$'s, we know their position, and can perform an infix
  query on the data structure for~$L$ for this infix. The current word is
  in~$L'$ iff this query returns true. Thus, by performing this infix query
  after every update to the word on~$L'$ (whenever the current word contains
  exactly two $x$'s), we obtain the desired information. This concludes the
  proof.
\end{proof}

\subparagraph*{Other open questions.}
A natural question for future work would be to study the complexity of
our problems in weaker models, e.g., pointer
machines~\cite{TARJAN1979110}, or machines with counters. One could
also extend our study to languages that are not regular, e.g., generalizing 
bounds on maintaining the language of
well-parenthesized strings (\cite[Proposition~1]{husfeldt1998hardness} and
\cite{frandsen1995dynamic}).

\bibliography{bib}

  \pagebreak

\begin{table}[ht]
  \caption{Summary of the main classes of monoids and semigroups used in the
  paper}
  \label{tab:classes}
\begin{tabularx}{\linewidth}{lXr@{\hspace{0.2em}}lr}
  \toprule
  {\bfseries Class} & {\bfseries Description} &\multicolumn{2}{c}{\bfseries
  Equation} & {\bfseries References} \\
\midrule

        $\ZE$	& Monoids/semigroups with central idempotents			  		&	$x^\omega y 				=$ 	&	$ 			y x^\omega		$		& \cite{almeida1987implicit} \\
        $\ZG$	& Monoids/semigroups with central groups  			  		&	$x^{\omega+1} y 			=$	&	$  			y x^{\omega+1}	$	&\cite{auinger2000semigroups}	\\
        $\SG$ 	& Monoids/semigroups with swappable groups			 	 		&	$x^{\omega+1} y x^\omega	=$ 	&	$x^\omega	y x^{\omega+1}	$	& \cite{azevedo1990join,skovbjerg1997dynamic} 	\\
        $\A$	& Aperiodic semigroups/monoids 			 		 		&	$x^{\omega+1} 				=$	&	$			x^\omega		$	& \cite{mpri}	\\
        $\Com$	& Commutative semigroups/monoids				 	 		&	$xy							=$	&	$ 			yx				$	& \cite{mpri}	\\
        $\Nil$	& Nilpotent semigroups				 	 		&	$x^\omega y 				=$	&	$			y x^\omega		$	& \cite{mpri}	\\
        $\MNil$	& Monoids dividing a nilpotent semigroup	 		&									&				& \cite{Straubing82}						\\
        $\D$	& Definite semigroups					 	 		& 	$y x^\omega					=$	&	$ 			x^\omega		$	& \cite{straubing1985finite}	\\
                                                \bottomrule
\end{tabularx}
\end{table}

  \begin{figure}[h!]
\includegraphics{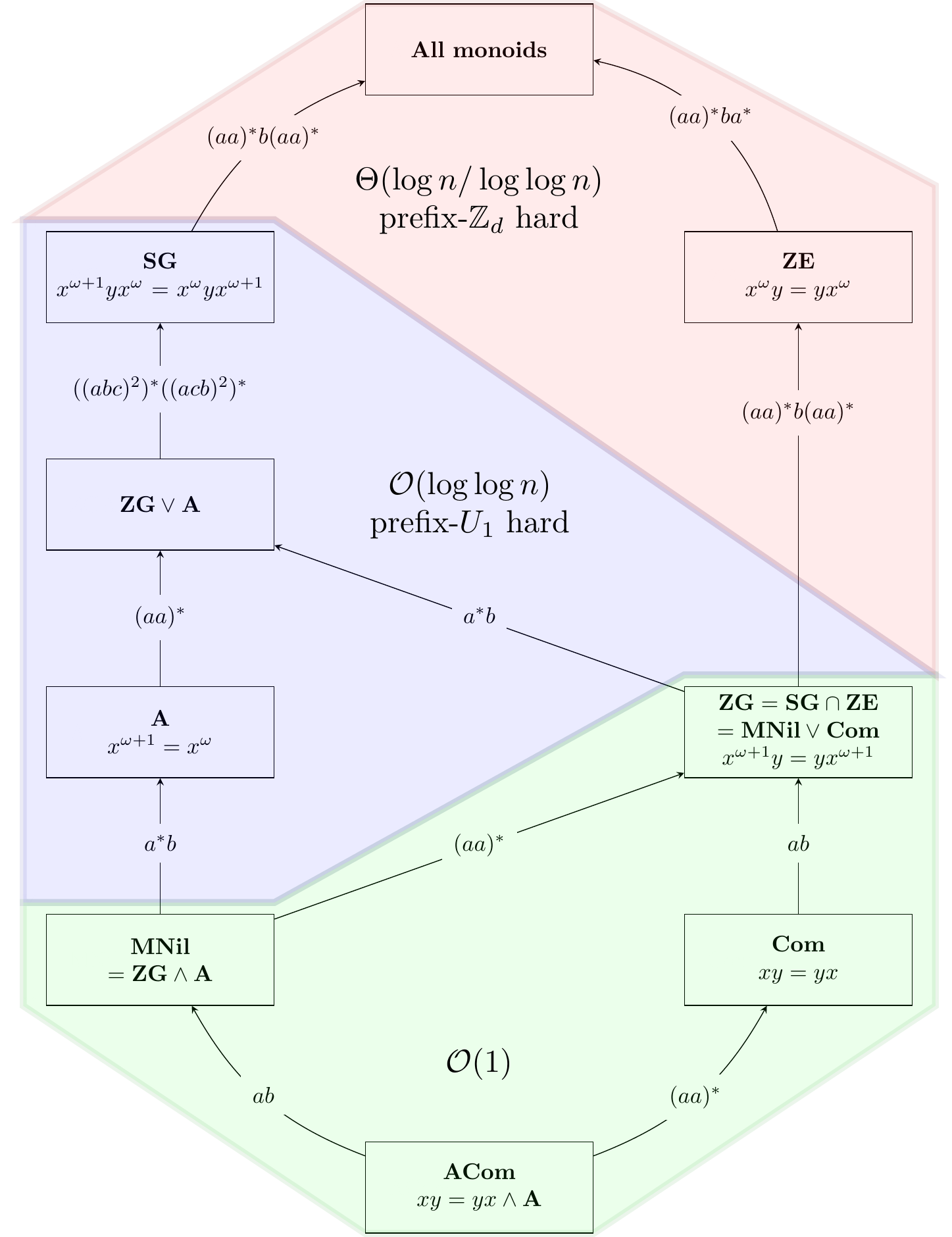}
\caption{Complexity of the dynamic word problem for common classes of monoids.
  Arrows denote inclusion and are labeled with languages (with an implicit
  neutral letter~$e$) whose syntactic monoids
  separate the classes. The classes $\ZG$ and $\SG$ are maximal for the $O(1)$
  region and $O(\log \log n)$ region respectively.}
  \label{fig:bestiaire}
\end{figure}

\end{document}